\pgfplotsset{compat=1.3}
\let\orgdescriptionlabel\descriptionlabel
\renewcommand*{\descriptionlabel}[1]{%
  \let\orglabel\label
  \let\label\@gobble
  \phantomsection
  \edef\@currentlabel{#1}%
  \let\label\orglabel
  \orgdescriptionlabel{#1}%
}
\tikzstyle{edge} = [fill=gray,draw=gray,line cap=round, line join=round, line width=30pt]
\tikzstyle{vertex} = [fill=white, draw=black,shape=circle,node distance=80pt]
\newcommand{\Dom}{\text{Dom}}
\newcommand{\val}{\text{val}}
\newcommand{\poly}{\text{poly}}
\newcommand{\N}{\mathbb N}
\newcommand{\M}{\nu}
\newcommand{\IG}{\mathcal I}
\newcommand{\IE}{I^E}
\newcommand{\IV}{I^V}
\newcommand{\RE}{R^E}
\newcommand{\RV}{R^V}
\newcommand{\mat}{}
\newcommand{\PE}{P^E}
\newcommand{\PV}{P^V}
\newcommand{\dn}{\nabla}
\newcommand{\NCeins}{\ensuremath{\text{NC}^1}}
\newcommand{\algoCPU}{KernelCPU}
\newcommand{\algoGPU}{KernelGPU}
\newcommand{\algoGPUFE}{KernelGPU-FE}
\newtheorem{theorem}{Theorem}[section]
\newtheorem{problem}[theorem]{Problem}
\newtheorem{observation}[theorem]{Observation}
\newtheorem{lemma}[theorem]{Lemma}
\newtheorem{inv}[theorem]{Invariant}
\newtheorem{definition}[theorem]{Definition}
\newtheoremstyle{iostuff}%
{0pt}%
{0pt}%
{\hangindent=\parindent}%
{}%
{\itshape}%
{:}%
{.5em}%
{}%
\theoremstyle{iostuff}
\newtheorem*{probinstance}{Input}
\newtheorem*{probtask}{Question}
\crefname{theorem}{Theorem}{Theorems}
\crefname{problem}{Problem}{Problems}
\crefname{lemma}{Lemma}{Lemmas}
\crefname{example}{Example}{Examples}
\crefname{rrule}{Reduction Rule}{Reduction Rules}
\crefname{definition}{Definition}{Definitions}
\crefname{algorithm}{Algorithm}{Algorithms}
\begin{document}
\begin{frontmatter}
  \title{Serial and parallel kernelization of Multiple Hitting Set
    parameterized by the Dilworth number, implemented on the
    GPU\tnoteref{t1}} \tnotetext[t1]{The results presented in this
    work are unrelated to the authors' work at Huawei or Recraft,
    respectively.}

  \author[hw]{René van Bevern\corref{cor1}}%
  \ead{rene.van.bevern@huawei.com}
  \address[hw]{Huawei Technologies Co., Ltd.}
  \cortext[cor1]{Corresponding author.}

  \author[sfu]{Artem M.\ Kirilin}
  \ead{kirilinartem@gmail.com}
  \address[sfu]{Siberian Federal University, Krasnoyarsk, Russian Federation}

  \author[hw]{Daniel A.\ Skachkov}
  \ead{skachkov.daniel@huawei-partners.com}

  \author[recraft]{Pavel V.\ Smirnov}
  \address[recraft]{Recraft}
  \ead{pavel.vl.smirnov@recraft.ai}
  
  \author[hw]{Oxana Yu.\ Tsidulko}
  \ead{tsidulko.oksana@huawei.com}

\begin{abstract}
  The NP-hard
  Multiple Hitting Set problem is
  the problem of finding a minimum\hyp cardinality set
  intersecting each of the sets in a given
  input collection
  a given number of times.
    Generalizing
    a well\hyp known data reduction algorithm
    due to \citeauthor{Wei98},
    we show a problem kernel for Multiple Hitting Set
    parameterized by the Dilworth number,
    a graph parameter introduced
    by \citeauthor{FH78} in \citeyear{FH78}
    yet
    seemingly so far unexplored in the context of parameterized
    complexity theory.
    Using matrix multiplication,
    we speed up the algorithm
    to quadratic sequential time
    and logarithmic parallel time.
    We experimentally evaluate
    our algorithms.
    By implementing our algorithm on GPUs,
    we show the feasibility
    of realizing kernelization algorithms
    on SIMD (Single Instruction, Multiple Data) architectures.
\end{abstract}
\end{frontmatter}

\section{Introduction}
\thispagestyle{empty}
\noindent
The following fundamental
combinatorial optimization problem
arises in bioinformatics \citep{MK13},
medicine \citep{Vaz09,MPMM10},
clustering \citep{HKMN10,BMN12},
automatic reasoning  \citep{Rei87,FBB18,BTU19},
feature selection \citep{FBNS16,BFL+19},
radio frequency allocation \citep{SMNW14},
software engineering \citep{OC03},
and public transport optimization \citep{Wei98,BFFS19}.

\begin{problem}[Multiple Hitting Set]
  \begin{probinstance}
    A hypergraph~$H=(V,E)$
    with \emph{vertices}~$V=\{v_1,\dots,v_n\}$ and
    \emph{hyperedges}~$E=\{e_1,\dots,e_m\}\subseteq 2^V$,
    a \emph{demand} function~$f:E\rightarrow\{1,\dots,\alpha\}$,
    and $k,\alpha\in\mathbb N$.
  \end{probinstance}
  \begin{probtask}
    Is there a \emph{hitting set}~$S\subseteq V$
    of cardinality at most~$k$
    such that $|e\cap S|\geq f(e)$
    for each~$e\in E$?
  \end{probtask}
\end{problem}
\noindent
Already the special case
with $f\equiv 1$, known as simply
\emph{Hitting Set},
is NP\hyp complete \citep{Kar72}.
Exact algorithms for NP\hyp complete problems
usually take time exponential in the input size.
Thus,
an important preprocessing step is data reduction,
which has proven to significantly
shrink real\hyp world instances of NP\hyp hard problems
\citep{Wei98,MPMM10,BFFS19,BFT20b,BS20,BMST23,ALM+22,ABG+20}.
In the context
of public transport optimization,
\citet{Wei98} introduced a simple but
very effective in experiments \citep{Wei98,BFFS19,BS20}
data reduction algorithm for Hitting Set.
It
exhaustively applies two data reduction rules
that do not alter the answer
to the input instance:

\begin{description}
\item[(W1)\label{w1}] If there is a pair of
  distinct vertices~$v_i$ and~$v_j$
  such that every hyperedge containing $v_j$
  also contains~$v_i$,
  delete~$v_j$.
\item[(W2)\label{w2}] If there is a pair of
  distinct hyperedges~$e_i$ and $e_j$
  such that $e_i\subseteq e_j$, then delete~$e_j$.
\end{description}
\noindent
\looseness=-1
Using \emph{kernelization}
from \emph{parameterized complexity theory},
which is formally defined in \cref{sec:prelim},
our work contributes to the understanding,
generalizes and speeds up \citeauthor{Wei98}'s
data reduction algorithm in the following ways.

\paragraph{Understanding}
While the data reduction effect of \citeauthor{Wei98}'s
algorithm is experimentally proven \citep{BFFS19,Wei98,BS20},
finding theoretical explanations for its effectivity
is challenging \citep{BFFS19,Fel02}.
For example, 
\citet{Fel02} asks
``Weihe's example looks like an FPT kernelization,
but what is the parameter?''
We show that \citeauthor{Wei98}'s algorithm
actually computes problem kernels for Hitting Set
whose number of vertices and hyperedges is linear in the
\emph{Dilworth number} of the incidence graph of
the input hypergraph
(see \cref{sec:dn} for a definition).

The Dilworth number has been introduced by \citet{FH78}
in \citeyear{FH78},
which led to a series of studies of the structure
of graphs with small Dilworth number
\citep[among others]{CP14,FRS03,HM89}.
However,
the Dilworth number until now seems unexplored in
a parameterized complexity context.
For example,
neither
\citet{GHHH18} nor \citet{SW19}
list it.
This is surprising,
since the Dilworth number is bounded from above by
the neighborhood diversity (see \cref{sec:dn}),
which is a
well\hyp established parameter
in parameterized complexity studies
\citep{Lam12,Gan12,GR15,AAK+20,GKK20},
and it seems a logical step
to analyze which parameterized algorithms
for the neighborhood diversity
can be strengthened to use the Dilworth number instead.

We note here that
the Dilworth number of the incidence graph
of a hypergraph~$H=(V,E)$ can be exponential in
the number~$|V|$ of vertices;
however,
one \emph{in principle} cannot expect
problem kernels of size polynomial in~$|V|$
for Hitting Set
unless the polynomial\hyp time hierarchy
collapses \citep{DLS14,DM14}.
Indeed,
for problem kernels of size polynomial in~$|V|$ to exist,
the cardinality of each input hyperedge must be bounded from above by a constant~$d$ \citep{DM14}.
In this case,
problem kernels of size~$k^{O(d)}$
have been shown
\citep{NR03,Dam06,FG06,Abu10,Mos10,Kra12,FK14,BST20,BHRT22,BS20,AMW20}.

\paragraph{Generalization}
\looseness=-1
Motivated by problems in
feature selection,
optimal cancer medication,
and genome-wide association studies,
attempts have been undertaken to generalize
\citeauthor{Wei98}'s
data reduction algorithm to Multiple Hitting Set
\citep{CSM04,MPMM10,MMM+16}.
However,
\citet{CSM04}
only generalize the hyperedge deletion rule \ref{w2}.
The generalization of the vertex deletion rule \ref{w1}
of \citet{MMMB05}
is wrong:
as shown in \ref{apx:counterexample},
it \emph{does} alter the answer to the input instance.

We provide a full generalization
of \citeauthor{Wei98}'s algorithm in the sense
that we obtain a problem kernel for Multiple Hitting Set
whose number of vertices and hyperedges is
linear in the Dilworth number
of the incidence graph of the input hypergraph
and in the maximum hyperedge demand~$\alpha$.

We provide additional data reduction rules
that, in the case of Multiple Hitting Set
and in contrast to \ref{w2},
allow for safe deletion even of hyperedges that
are \emph{not} contained within each other.
While not provably lowering the size of problem kernels,
we show their significant additional data reduction effect in experiments.

\paragraph{Speed-up}
\looseness=-1
A comparison between
\citeauthor{Wei98}'s algorithm and
linear\hyp time data reduction algorithms for Hitting Set  has shown that
the data reduction effect of \citeauthor{Wei98}'s algorithm
is clearly superior when there are large hyperedges,
yet the algorithm is significantly slower,
even so much so as
to make it inapplicable to large hypergraphs \citep{BS20}.
By looking at the algorithm
through the lens of multiplying
incidence matrices of the input hypergraph,
we provide quadratic sequential\hyp time
and logarithmic parallel\hyp time variants
of our kernelization algorithms,
thus contributing to
parallel kernelization studies,
which have recently gained increased interest \citep{BST20,BT20}.
In contrast to previous,
purely theoretic and proof\hyp of\hyp concept
work on parallel kernelization,
we implement our algorithm on GPUs
and thus prove the feasibility
of realizing kernelization algorithms on 
SIMD (Single Instruction, Multiple Data) 
architectures,
in which all cores of a multiprocessor execute the same operation,
yet on different (parts of) the data.
Kernelization algorithms,
which are often a set of data reduction rules applied
on different parts of the input,
seem to lend themselves to implementation on SIMD architectures.

\paragraph{Organization of this work}
\cref{sec:prelim}
introduces basic graph\hyp theoretic,
parameterized complexity,
and kernelization terminology.
\cref{sec:kern}
presents
known and new
data reduction rules for Multiple Hitting Set,
and shows that they yield
a problem kernel whose number of vertices and hyperedges is linear
in the Dilworth number.
\cref{sec:matmult}
shows how to obtain
fast serial and parallel implementations
of the data reduction rules
via matrix multiplication.
\cref{sec:exp} experimentally evaluates the effect
and speed of our data reduction rules.

\section{Preliminaries}
\label{sec:prelim}
\noindent
In this section,
we introduce basic graph\hyp theoretic,
parameterized complexity,
and kernelization terminology.

\subsection{Graphs and hypergraphs}
\paragraph{Hypergraphs}
A \emph{hypergraph} is a pair $H = (V, E)$,
where $V$~is a set of \emph{vertices} and~$E\subseteq 2^{V}$ is a set of \emph{(hyper)edges}.
We will often denote the vertices
and hyperedges as~$V=\{v_1,\dots,v_n\}$ and $E=\{e_1,\dots,e_m\}$,
respectively.

For a vertex~$v\in V$,
by $E(v):=\{e\in E\mid v\in e\}$,
we denote the set of hyperedges containing~$v$.
By $|H| = |V| + \sum_{e\in E}|e|$,
we denote the \emph{size} of the hypergraph.
This notion of size
is motivated by the fact
that each incidence between a vertex and a hyperedge
has to be encoded in some form.
The \emph{incidence matrix}~$A(H)$ of a hypergraph~$H=(V, E)$
 is an
$(m\times n)$-matrix such that
\[
    A_{ij} = \begin{cases}
        1,&\text{if $v_j\in e_i$ and}\\
        0,&\text{otherwise.}
    \end{cases}
\]

\paragraph{Undirected graphs}
A hypergraph whose hyperedges have cardinality two
is an \emph{(undirected) graph}.
For a vertex~$v\in V$ of a graph~$G=(V,E)$,
$N(v):=\{u\in V\mid \{u,v\}\in E\}$
denotes the \emph{open neighborhood of~$v$}
and $N[v]:=N(v)\cup\{v\}$ denotes the
\emph{closed neighborhood}.
To avoid confusion,
throughout this work,
the notation $N(v)$ is exclusively applied to graphs,
whereas $E(v)$ is exclusively applied to hypergraphs (that are not graphs).

A \emph{matching} in a graph
is a set of pairwise disjoint edges.
The \emph{matching number $\M(G)$}
of a graph~$G$ is the maximum cardinality
of any matching in~$G$.
The \emph{incidence graph}~$\IG(H)$ of
a hypergraph~$H=(V,E)$
is a
bipartite graph with the vertex set~$V\cup E$
and %
an edge~$\{v,e\}$ for each~$v\in V$ and~$e\in E$
such that $v\in e$.
That is,
for any hyperedge~$e$ of~$H$,
we have $N(e)=e$ in the incidence graph~$\IG(H)$.

\paragraph{Directed graphs}
A \emph{directed graph}~$G=(V,A)$
consists of vertices~$V=\{v_1,\dots,v_n\}$
and \emph{arcs}~$A=\{a_1,\dots,a_m\}\subseteq V^2$.
If $(v_i,v_j)\in A$,
we call $v_i$ a \emph{parent} of $v_j$.
We call $v_i$ an \emph{ancestor} of~$v_j$
if $v_i=v_j$, if $v_i$~is a parent of~$v_j$,
or if $v_i$~is a parent of an ancestor of~$v_j$.
A vertex without parents is called a \emph{source},
a vertex that is not a parent of any other vertex is a \emph{sink}.

A \emph{cycle} in a directed graph is
a sequence of vertices~$v_1,v_2,\dots,v_\ell$
such that $(v_i,v_{i+1})\in A$ for $i\in\{1,\dots,\ell-1\}$
and $(v_\ell,v_1)\in A$.
A directed graph is \emph{acyclic} if it contains no cycles.
\begin{observation}
  \label{obs:alive-ancestors}
  Let~$G=(V,A)$ be a directed acyclic graph and $X\subseteq V$.
  Every $v\in X$ has
  an ancestor~$u$ in~$X$,
  all of whose parents are not in~$X$
  (possibly, $u=v$
  and that set of parents may be empty).
\end{observation}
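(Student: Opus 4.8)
The plan is to locate $u$ by starting at $v$ and repeatedly stepping to a parent that still lies in $X$, arguing that acyclicity forces this process to halt at a vertex all of whose parents have left $X$. To make this rigorous without bookkeeping an explicit walk, I would instead work with the set
\[
  Y := \{\, w \in X \mid w \text{ is an ancestor of } v \,\}
\]
of all ancestors of $v$ that belong to $X$, and extract $u$ as a suitably chosen vertex of $Y$.

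First I would note that $Y \neq \emptyset$: by the definition of ancestor, $v$ is an ancestor of itself, and $v \in X$ by hypothesis, so $v \in Y$. The key structural observation I would isolate next is that the parent relation cannot leave the ancestor set of $v$: if $w$ is an ancestor of $v$ and $p$ is a parent of $w$, then, directly from the recursive definition of ``ancestor'' ($p$ is a parent of an ancestor of $v$), $p$ is itself an ancestor of $v$. Consequently, for any $w \in Y$, a parent $p$ of $w$ lies in $X$ if and only if $p \in Y$.

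With this equivalence in hand, the requirement ``all parents of $u$ are outside $X$'' becomes ``$u$ has no parent inside $Y$'', that is, $u$ is a source of the subgraph of $G$ induced by $Y$. Since $G$ is acyclic, this induced subgraph is again a finite acyclic digraph, and therefore has a source: choosing any vertex of $Y$ and following parent edges backward can never repeat a vertex (acyclicity) and must terminate (finiteness) at a vertex of $Y$ with no parent in $Y$. Taking $u$ to be such a source completes the argument, since then $u \in Y \subseteq X$, $u$ is an ancestor of $v$, and by the equivalence above none of its parents lie in $X$.

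I expect the only real subtlety to be the claim that every parent of an ancestor of $v$ is again an ancestor of $v$; everything else (nonemptiness of $Y$, existence of a source in a finite acyclic digraph) is routine once that closure property is in place. The edge cases flagged in the statement are then handled automatically: if $v$ itself has no parents in $X$ we may take $u = v$, and the set of parents of $u$ being empty is merely the base case of a source with no incoming arcs.
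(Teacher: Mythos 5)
Your proof is correct. The paper itself states this as an \emph{observation} with no proof at all, treating it as immediate; your argument is a sound formalization of the intended reasoning (walk from $v$ to parents that remain in $X$ until acyclicity and finiteness force the process to stop). The one step you rightly single out --- that a parent of an ancestor of $v$ is again an ancestor of $v$, so that ``no parent in $Y$'' really does imply ``no parent in $X$'' --- is exactly the closure property that makes the source-of-the-induced-subgraph argument go through, and it follows directly from the paper's recursive definition of ancestor. Nothing is missing.
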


\subsection{Complexity theory and kernelization}
\noindent
Formally,
we study decision problems~$\Pi\subseteq\{0,1\}^*$,
where the task is to decide whether a given~$x\in\{0,1\}^*$
belongs to~$\Pi$.

\paragraph{Parameterized complexity}
A \emph{parameterized problem}
is a pair
$(\Pi,\kappa)$
where $\Pi\subseteq\{0,1\}^*$ is a decision problem
and $\kappa\colon\{0,1\}^*\to\N$ is 
a polynomial\hyp time computable function
called a \emph{parameterization}.

A~\emph{kernelization} for a parameterized problem~$(\Pi,\kappa)$
is a polynomial\hyp time algorithm
that maps any instance~$x\in\{0,1\}^*$
to an instance~$x'\in\{0,1\}^*$
such that $x\in \Pi \iff x'\in \Pi$ and such that
$|x'|\leq g(\kappa(x))$ for some computable function~\(g\).
We call \(x'\) the \emph{problem kernel}
and \(g\) its \emph{size}.

\paragraph{Circuit families}
A \emph{Boolean circuit}
with $n$~inputs and $m$~outputs
is a directed acyclic graph
with $n$~sources and $m$~sinks;
each of its
its non\hyp source vertices~$v$
has one of the following three types:
\begin{itemize}[---]
\item $v$~is labeled ``$\neg$'' and has exactly one parent,
  
\item $v$~is labeled ``$\vee$'' and has exactly two parents,
  
\item $v$~is labeled ``$\wedge$'' and has exactly two parents.
\end{itemize}
Identifying 1 with the Boolean value \emph{true} and
0 with the Boolean value \emph{false},
we can inductively define the
\emph{value}~$\val(v)$ of each vertex~$v$ of the Boolean circuit
on \emph{input} $x\in\{0,1\}^n$ as follows.
\begin{itemize}[---]
\item The value of the $i$-th source is~$x_i$,
\item For any vertex~$v$ labeled ``$\neg{}$'' with parent $u$,
  $\val(v):=\neg\val(u)$.
\item For any vertex~$v$ labeled ``$\vee$'' with parents~$u_1$ and~$u_2$,
  $\val(v):=\val(u_1)\vee\val(u_2)$.
\item For any vertex~$v$ labeled ``$\wedge$'' with parents~$u_1$ and~$u_2$,
  $\val(v):=\val(u_1)\wedge\val(u_2)$.
\end{itemize}
Denoting the sinks as $u_1,\dots,u_m$,
we call $\val(u_1)\val(u_2)\dots\val(u_m)\in\{0,1\}^m$
the \emph{output} of the Boolean circuit on input~$x$.
The \emph{size} of a Boolean circuit is its number of vertices.
The \emph{depth} of a Boolean circuit is the length
of the longest path from any source to any sink.

An \emph{$\NCeins{}$ circuit family} is a
sequence~$(C_n)_{n\in \N}$ of Boolean circuits,
each with
$n$~inputs, $O(\log n)$ depth, and $\poly(n)$ size.
We say that an \NCeins{} circuit family~$(C_n)_{n\in\N}$
decides a problem~$\Pi\subseteq\{0,1\}^*$
if $x\in \Pi$ if and only if the circuit~$C_{|x|}$ outputs 1
when given~$x$ on the input.
More generally,
we say that an \NCeins{} circuit family~$(C_n)_{n\in\N}$
computes a function~$f\colon\{0,1\}^*\to \{0,1\}^*$
if,
for any~$x\in\{0,1\}^*$,
the circuit $C_{|x|}$ outputs $f(x)$ on input~$x$.

The outputs of \NCeins{} circuits can be computed
in logarithmic time on a parallel computer with
a polynomial number of processors,
where each processor computes the value of one vertex,
taking as input the values of its parents
\citep[Section~6.7]{AB09}.

\paragraph{Input encoding}
For our parallel hypergraph algorithms (or \NCeins{} circuits),
we will generally assume incidence matrices as input,
so that each parallel processor gets one matrix entry as input.
For sequential hypergraph algorithms,
we assume a list of vertices
and a list of hyperedges on the input.

\subsection{Dilworth number and neighborhood diversity}
\label{sec:dn}
\noindent
Consider the relation~$\sqsubseteq$
on the vertices of a graph~$G=(V,E)$ such that
\[
  u\sqsubseteq v\iff N(u)\subseteq N[v].
\]
This relation is reflexive and transitive
and is called the \emph{vinical order} of~$G$ \citep{FH78}.
We call two vertices~$u$ and~$v$ \emph{incomparable}
if $u\not\sqsubseteq v$ and~$v\not\sqsubseteq u$.
A \emph{chain} is a subset of pairwise comparable vertices,
whereas an \emph{antichain} is a subset
of pairwise incomparable vertices.
The \emph{Dilworth number~$\dn(G)$} of~$G$
is the size of a largest antichain in~$\sqsubseteq$~\citep{FH78},
which,
by Dilworth's theorem \citep{Dil50},
is equivalent to
the minimum number of chains whose union is~$V$.

A related frequently studied graph parameter
is the neighborhood diversity
\citep{Lam12,Gan12,GR15,AAK+20,GKK20}.
To introduce it,
consider the relation~$\sim$ on the vertices
of a graph~$G=(V,E)$ such that
\[
  u\sim v\iff N(u)\setminus\{v\}=N(v)\setminus\{u\}.
\]
This is an equivalence relation \citep{Lam12}
and the \emph{neighborhood diversity~$\delta(G)$}
of~$G$ is the number of equivalence classes of~$\sim$.

\citet{FH78} relate the Dilworth number
to many other graph parameters,
yet not to the neighborhood diversity,
which is frequently used for parameterized
complexity analysis but was introduced much later \citep{Lam12}.
It is not hard to show
that the Dilworth number is upper-bounded by the neighborhood diversity,
but the gap between the two can be arbitrarily large:

\begin{lemma}\leavevmode
  \begin{compactenum}[(i)]
  \item\label{dD1} For any graph~$G$, one has   $\dn(G)\leq\delta(G)$.
    
  \item\label{dD2} For any~$n\in\N$,
    there is a graph~$G$ on $2n$~vertices with
    $1=\dn(G)\leq \delta(G)=2n-1$.
  \end{compactenum}
\end{lemma}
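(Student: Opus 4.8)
The plan is to treat the two parts separately, deriving part~(\ref{dD1}) from a comparability property of the twin relation~$\sim$ and establishing part~(\ref{dD2}) via an explicit threshold graph.

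For part~(\ref{dD1}), I would first show that any two vertices lying in the same equivalence class of~$\sim$ are comparable in the vinical order~$\sqsubseteq$. Suppose $u\sim v$, that is, $N(u)\setminus\{v\}=N(v)\setminus\{u\}$. If $u$ and~$v$ are non-adjacent, then $v\notin N(u)$ and $u\notin N(v)$, so the hypothesis collapses to $N(u)=N(v)\subseteq N[v]$, giving $u\sqsubseteq v$. If $u$ and~$v$ are adjacent, then $N(u)=(N(v)\setminus\{u\})\cup\{v\}\subseteq N(v)\cup\{v\}=N[v]$, again giving $u\sqsubseteq v$. Hence every $\sim$\hyp class is a chain in~$\sqsubseteq$, so any antichain contains at most one vertex per class. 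Since there are $\delta(G)$ classes, every antichain has size at most~$\delta(G)$, whence $\dn(G)\le\delta(G)$.

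For part~(\ref{dD2}), I would exhibit the threshold graph $G_n$ on vertex set $\{1,\dots,2n\}$ in which $\{i,j\}$ with $i<j$ is an edge if and only if $j$ is even (equivalently, $G_n$ is built by adding the vertices $1,\dots,2n$ in order, the odd ones as isolated vertices and the even ones as dominating vertices). Directly, $N(v)=\{\,w>v:w\text{ even}\,\}$ when $v$ is odd, and $N(v)=\{1,\dots,v-1\}\cup\{\,w>v:w\text{ even}\,\}$ when $v$ is even. To show $\dn(G_n)=1$ I would verify that every pair of vertices is comparable, so that the largest antichain is a single vertex. The cases split by parity: for two odd vertices the open neighborhoods are nested; for two even vertices $u<v$ one checks $N(u)\subseteq N[v]$ by splitting the even elements above~$u$ at the threshold~$v$; and for a mixed pair consisting of an odd vertex~$u$ and an even vertex~$v$, one has $N(u)\subseteq N[v]$ when $u<v$, while if $u>v$ then still $N(u)\subseteq N[v]$ because $N(u)$ consists only of even vertices exceeding $u>v$. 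This parity case analysis is the main technical step.

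Finally, to compute $\delta(G_n)$ I would use that two vertices related by~$\sim$ necessarily have equal degree, since they are true or false twins. From the neighborhood formulas, the odd vertices $1,3,\dots,2n-1$ have degrees $n,n-1,\dots,1$ and the even vertices $2,4,\dots,2n$ have degrees $n,n+1,\dots,2n-1$; these degree values are pairwise distinct except that vertices~$1$ and~$2$ both have degree~$n$. Hence no pair other than $\{1,2\}$ can be a twin pair, and a direct check gives $N(1)\setminus\{2\}=N(2)\setminus\{1\}=\{4,6,\dots,2n\}$, so $1\sim 2$. Thus $\sim$ has exactly the classes $\{1,2\},\{3\},\{4\},\dots,\{2n\}$, that is $\delta(G_n)=2n-1$, while $\dn(G_n)=1$, as required.
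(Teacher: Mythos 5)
Your proof is correct, and it follows essentially the same route as the paper's: part~(i) rests on the same key fact that $u\sim v$ implies $u\sqsubseteq v$ (the paper derives this by a direct chain of set inclusions rather than splitting on whether $u$ and $v$ are adjacent, but the substance is identical), and your threshold graph is precisely the paper's construction up to relabeling (your odd vertices are its isolated additions $u_i$, your even vertices its dominating additions $v_i$), with the same case analysis establishing that all pairs are comparable in the vinical order. The one place you genuinely diverge is the computation of $\delta(G)=2n-1$: the paper verifies $u\not\sim v$ for every pair by an exhaustive case analysis over the four parity/index combinations, whereas you observe that $\sim$\hyp equivalent vertices must have equal degree, note that the degree sequence $1,2,\dots,n$ (odd vertices) and $n,n+1,\dots,2n-1$ (even vertices) has only one repeated value, and then check the single candidate pair $\{1,2\}$ directly. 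This degree argument is a neat shortcut that collapses the paper's pairwise verification to a one\hyp pair check, at the cost of first proving the (easy) auxiliary fact that twins have equal degree; either way the conclusion $1=\dn(G)\leq\delta(G)=2n-1$ stands.
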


\begin{proof}
  \eqref{dD1}
  Consider any pair of vertices~$u$ and~$v$ of~$G$.
  Then,
  \begin{align*}
u\sim v\implies&&    N(u)\setminus \{v\}&\subseteq N(v)\setminus\{u\}\\
    \implies &&    N(u)\setminus\{v\}&\subseteq N(v)\\
    \implies &&    N(u)&\subseteq N(v)\cup\{v\}\\
    \implies && u\sqsubseteq v.
  \end{align*}
  By definition,
  the vinical order~$\sqsubseteq$ of~$G$
  has an antichain~$C$ of cardinality~$\dn(G)$.
  For any pair of distinct vertices~$u$ and~$v$ in~$C$,
  we have $u\not\sqsubseteq v$ and thus $u\not\sim v$.
  It follows that $\sim$ has at least~$|C|$ equivalence classes,
  and thus
  $\dn(G)=|C|\leq\delta(G)$.

  \eqref{dD2}
  Construct a graph~$G$ on $2n$~vertices as follows.
  Start with an empty graph.
  Then,
  for each $i\in\{1,\dots,n\}$,
  add first an isolated vertex~$u_i$
  and then add a vertex~$v_i$ that is adjacent to all previously
  added vertices.
  We get $\delta(G)=2n-1$ since each pair of vertices in~$G$,
  except for $u_1$ and~$v_1$,
  are pairwise nonequivalent under~$\sim$:
  \begin{itemize}[---]
  \item For any~$i,j\in\{1,\dots,n\}$ such that $i\ne j$,
    one has $u_i\not\sim v_j$ 
    since $v_j$ is adjacent to~$u_j$ but $u_i$~is~not
    
  \item For any~$i\in\{2,\dots,n\}$,
    one has $u_i\not\sim v_i$,
    since $v_i$~is adjacent to~$v_1$ but $u_i$ is not.
  \item 
  For $1\leq i<j\leq n$,
  one has
    $v_i\not\sim v_j$ since
    $v_j$ is adjacent to~$u_j$, but $v_i$ is nonadjacent to~$u_j$, and one has
    $u_i\not\sim u_j$
    since $v_i$ is adjacent to~$u_i$ but $u_j$~is nonadjacent to~$v_i$.
  \end{itemize}
  We also get $\dn(G)=1$
  since
  each pair of vertices is comparable
  in the vinical order of~$G$:
  for $1\leq i\leq j\leq n$,
  one has
  \begin{itemize}
  \item[{$N(v_i)\subseteq N[v_j]$}] since
    any vertex added to~$G$
    before~$v_j$ is obviously in~$N[v_j]$
    and any vertex added to~$G$
    after~$v_j$ is adjacent
    either to both of~$v_i$ and~$v_j$
    or to none of them,
  \item[{$N(u_j)\subseteq N[u_i]$}] since
    any vertex added to~$G$ before $u_j$
    is nonadjacent to~$u_j$
    and any vertex added to~$G$ after~$u_j$
    is adjacent either to all or none of~$u_i$ and~$u_j$.
  \end{itemize}
  Finally,
  one has $N(u_1)\subseteq N[v_1]$,
  that is, one can cover all vertices
  by the single chain
  $u_n\sqsubseteq u_{n-1}\sqsubseteq\dots\sqsubseteq u_1\sqsubseteq v_1\sqsubseteq v_2\sqsubseteq\dots\sqsubseteq v_n$.
  Thus,
  $\dn(G)=1$.
\end{proof}

\section{A problem kernel for Multiple Hitting Set}
\label{sec:kern}
\noindent
In \cref{sec:dr},
we present several data reduction rules for Multiple Hitting Set
that generalize \citeauthor{Wei98}'s rules \ref{w1} and~\ref{w2}.
In \cref{sec:kern2},
we show how problem kernels can be obtained by applying (subsets of)
these data reduction rules.
Later,
in \cref{sec:exp},
we experimentally evaluate
several combinations of these data reduction rules.

\subsection{Data reduction rules}
\label{sec:dr}

\noindent
The following two data reduction rules
were
suggested by \citet{CSM04}.
The ``full edge'' rule \ref{rule:full-edges} exploits that all vertices in a
hyperedge~$e_j$ with demand~$f(e_j)=|e_j|$ must belong to any feasible
solution:

\begin{description}
\item[(FE)\label{rule:full-edges}\label{FE}]
  If there is a hyperedge~$e_j\in E$
  such that $|e_j|=f(e_j)$,
  then delete~$e_j$,
  delete each $v\in e_j$,
  decrement $k$ by~$|e_j|$,
  and decrement $f(e_i)$
  by one for each hyperedge~$e_i$ containing~$v$,
  deleting hyperedges~$e_i$ for which $f(e_i)$ reaches 0.
\end{description}
The ``superedge'' rule \ref{rule:superedges} is a straightforward
generalization of \ref{w2} %
and exploits that
the deleted hyperedge~$e_j$ will be hit whenever $e_i$ is hit:
\begin{description}
\item[(SE)\label{rule:superedges}\label{SE}]
  If there is a pair of distinct
  hyperedges $e_i, e_j \in E$
  such that $e_i \subseteq e_j$ and $f(e_i) \geq f(e_j)$,
  then
  delete~$e_j$.
\end{description}

\noindent
Interestingly,
one can further generalize \ref{SE} so that it may delete even
hyperedges that are \emph{not} contained one in another.
Assume,
for example,
two distinct yet intersecting hyperedges~$e_i,e_j\in E$.
Any hitting set~$S$ has to contain $f(e_i)$~vertices of~$e_i$,
yet
there are only $|e_i\setminus e_j|$ elements in~$e_i$ that are not
also in~$e_j$.
Thus,
the remaining $f(e_i)-|e_i\setminus e_j|$ elements of~$S$ must be in~$e_i\cap e_j$,
we say that $e_i$ \emph{pushes} this amount of demand to~$e_j$.
If $e_i$ pushes at least $f(e_j)$~units of demand to~$e_j$,
then we know that $e_j$ will be hit whenever $e_i$~is,
and can thus delete~$e_j$.
\begin{definition}
  A hyperedge~$e_i$ \emph{supersedes} a hyperedge~$e_j$
  if $f(e_i)-|e_i\setminus e_j|\geq f(e_j)$.
\end{definition}
\noindent
This gives rise to the following ``demand pushing'' rule \ref{DP},
which subsumes rule \ref{rule:superedges} of \citet{CSM04}.
\begin{description}
\item[(DP)\label{rule:pushing-demand}\label{DP}]
  If there is a pair of distinct
  hyperedges~$e_i,e_j\in E$
  such that $e_i$ supersedes~$e_j$,
  then delete~$e_j$.
\end{description}

\noindent
To further generalize the data reduction rule,
one can exploit that,
if a hyperedge~$e_j\in E$ intersects several hyperedges,
then every hyperedge~$e_i$ intersecting~$e_j$
pushes some demand to~$e_i\cap e_j$.
If satisfying these demands requires
at least $f(e_j)$~elements from~$e_j$,
then one can delete~$e_j$.
This leads to the following data reduction rule.
\begin{description}
\item[(LP)\label{rul:lp}\label{LP}]
  For a hyperedge~$e_j\in E$,
  consider
  the hypergraph~$H_j$
  on the same vertex set as~$H$
  yet
  containing, for each hyperedge~$e_i\in E$,
  a hyperedge~$e_i\cap e_j$ with demand $f(e_i)-|e_i\setminus e_j|$
  whenever this value is positive.
  Now,
  consider a lower bound~$L_j$
  on the cardinality of any multiple hitting set for~$H_j$.%
  \footnote{The lower bound $L_j$ can be
obtained, for example,
via an integer linear programming relaxation.}
  If $L_j\geq f(e_j)$,
  then delete~$e_j$.
\end{description}
\noindent
\looseness=-1
To verify the correctness of \ref{rul:lp},
observe that,
even if the multiple hitting set in~$H$
contains $e_i\setminus e_j$
for each $e_i\in E$,
it still needs to contain at least $f(e_i)-|e_i\setminus e_j|$ from~$e_i\cap e_j$.
If meeting these demands for all~$e_i\cap e_j$
requires at least~$f(e_j)$ vertices,
that is, if the rule's condition is satisfied,
then~$e_j$ will be hit $f(e_j)$~times anyway,
since $e_j\supseteq e_i\cap e_j$ for each $e_i\in E$.

Finally,
to prove a problem kernel,
we will also provide a generalization of \citeauthor{Wei98}'s data reduction rule \ref{w1}
to Multiple Hitting Set.
Unfortunately,
a previous generalization attempt of \citet{MMMB05}
turned out to be wrong (see \ref{apx:counterexample}).
To state the data reduction rule,
we need the following definition.

\begin{definition}
  For a pair of distinct vertices $v_i, v_j \in V$,
  we say that~$v_i$ \emph{dominates (or is a dominator for)}~$v_j$
  if $E(v_j) \subseteq E(v_i)$.
  By $\Dom(v_j)$, we denote all dominators for~$v_j$.
\end{definition}

\noindent
Note that,
if $v_i$~dominates $v_j$,
then $v_j$~can be replaced by~$v_i$ in any hitting set.
Thus,
if
$|\Dom(v_j)|\geq\max_{e \in E(v_j)}f(e)$,
then we can safely delete~$v_j$
from the hypergraph,
since $\Dom(v_j)$~contains a sufficient amount of vertices
to satisfy the demand of any hyperedge containing~$v_j$.
This gives rise to the following ``multiple domination'' rule.
\begin{description}
\item[(MD)\label{rule:multiple-domination}\label{MD}]
  If there is a vertex~$v_j$ such that $|\Dom(v_j)|\geq\max_{e \in E(v_j)}f(e)$,
  then delete~$v_j$.
\end{description}

\subsection{Problem kernel size}
\label{sec:kern2}

\noindent
In this section,
we show how to use the data reduction rules
presented in \cref{sec:dr}
to obtain a problem kernel for Multiple Hitting Set.

For the kernel size analysis,
we will use the following lemma,
which relates vertex dominance and hyperedge inclusion
within a hypergraph to
comparability in the vinical order of the incidence graph.

\begin{lemma}
  \label{lem:inc}
  Let $H$~be a hypergraph and $\sqsubseteq$~be the vinical order
  of its incidence graph~$\IG(H)$.  Then
  \begin{enumerate}[(i)]
  \item\label{inc1} if, for two hyperedges~$e_i$ and~$e_j$,
    one has $e_i\nsubseteq e_j$ and $e_j\nsubseteq e_i$,
    then $e_i$ and~$e_j$ are incomparable with respect to~$\sqsubseteq$,
    
  \item\label{inc2} if, out of two vertices $v_i$ and $v_j$,
    neither dominates the other,
    then $v_i$ and $v_j$ are incomparable with respect to~$\sqsubseteq$,

  \end{enumerate}
\end{lemma}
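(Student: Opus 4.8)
The plan is to prove both parts by establishing the contrapositive: I will show that comparability in the vinical order $\sqsubseteq$ forces either hyperedge inclusion (part~\eqref{inc1}) or vertex dominance (part~\eqref{inc2}). The key is to unpack what $\sqsubseteq$ means in the specific bipartite incidence graph $\IG(H)$, recalling from the preliminaries that in $\IG(H)$ we have $N(e)=e$ for every hyperedge~$e$, and dually $N(v)=E(v)$ for every vertex~$v$. Because $\IG(H)$ is bipartite with parts $V$ and $E$, a hyperedge vertex $e_i$ has a neighborhood lying entirely in~$V$, while a vertex $v_i$ has a neighborhood lying entirely in~$E$; this separation is what will let me translate the abstract condition $N(u)\subseteq N[w]$ into the concrete set inclusions of the statement.

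For part~\eqref{inc1}, I would argue the contrapositive: suppose $e_i$ and $e_j$ are comparable, say $e_i\sqsubseteq e_j$, meaning $N(e_i)\subseteq N[e_j]=N(e_j)\cup\{e_j\}$. Since $N(e_i)=e_i\subseteq V$ and $N(e_j)=e_j\subseteq V$, but $e_j$ itself is a vertex of the $E$-side (not in $V$), the extra element $\{e_j\}$ cannot appear in~$N(e_i)$. Hence $N[e_j]\cap V = N(e_j)=e_j$, so the inclusion collapses to $e_i\subseteq e_j$, contradicting $e_i\nsubseteq e_j$. The symmetric case gives $e_j\subseteq e_i$. So comparability implies one of the inclusions fails to be ruled out, which proves the claim. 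Part~\eqref{inc2} proceeds identically with the roles swapped: if $v_i\sqsubseteq v_j$ then $N(v_i)=E(v_i)\subseteq N[v_j]\cap E = E(v_j)$ (again the closed-neighborhood's extra element $v_j$ lives on the $V$-side and drops out), which is precisely the statement that $v_i$ dominates~$v_j$; the reverse comparability gives $v_j$ dominating~$v_i$.

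The only real subtlety, and the step I would be most careful with, is verifying that the self-element added by the closed neighborhood $N[\,\cdot\,]$ genuinely contributes nothing in the bipartite setting. For hyperedge comparisons the added element $e_j$ is an $E$-vertex while $N(e_i)$ consists of $V$-vertices, and for vertex comparisons the added element $v_j$ is a $V$-vertex while $N(v_i)$ consists of $E$-vertices; in both cases the bipartite structure guarantees disjoint parts, so intersecting with the appropriate side removes the self-element cleanly. I would state this side-elimination observation once and invoke it in both parts, rather than repeating the bipartite argument. With that observation in hand, each implication is a one-line set-inclusion manipulation, so I do not expect any further difficulty; the proof is essentially a direct decoding of the definition of $\sqsubseteq$ through the combinatorial identities $N(e)=e$ and $N(v)=E(v)$.
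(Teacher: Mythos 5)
Your proposal is correct and takes essentially the same route as the paper's proof: unpack $\sqsubseteq$ via the identities $N(e)=e$ and $N(v)=E(v)$ in the incidence graph, and observe that the self-element contributed by the closed neighborhood lies on the opposite side of the bipartition (the paper phrases this as ``hyperedges do not contain other hyperedges as elements'' and ``$v_i\notin E(v_j)$''), so the inclusion collapses to hyperedge containment in part (i) and to vertex dominance in part (ii); your contrapositive framing and the paper's proof by contradiction are the same argument. One cosmetic slip: in part (ii), the inclusion $E(v_i)\subseteq E(v_j)$ means that $v_j$ dominates $v_i$ under the paper's definition, not that $v_i$ dominates $v_j$ — but since the hypothesis ``neither dominates the other'' is symmetric, this mislabeling does not affect the validity of your argument.
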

\begin{proof}
  \eqref{inc1} Note that,
  in the incidence graph~$\IG(H)$,
  one has $N(e_i)=e_i$
  and $N(e_j)=e_j$.
  Towards a contradiction,
  assume $e_i\sqsubseteq e_j$.
  Then,
  by the definition of the vinical order~$\sqsubseteq$,
  one has
  $e_i=N(e_i)\subseteq N[e_j]=e_j\cup \{e_j\}$.
  Since hyperedges do not contain
  other hyperedges as elements,
  we definitively have $e_j\notin e_i$,
  and thus, in fact, $e_i\subseteq e_j$,
  contradicting the assumption that $e_i\nsubseteq e_j$.
  Analogously,
  from $e_j\sqsubseteq e_i$ follows
  the contradiction $e_j\subseteq e_i$.

  \eqref{inc2}.
  Note that,
  in the incidence graph~$\IG(H)$,
  one has $N(v_i)=E(v_i)$ and $N(v_j)=E(v_j)$.
  Towards a contradiction,
  assume that $v_j\sqsubseteq v_i$,
  that is, $E(v_j)=N(v_j)\subseteq N[v_i]=E(v_i)\cup\{v_i\}$.
  Since definitively $v_i\notin E(v_j)$,
  we in fact have $E(v_j)\subseteq E(v_i)$,
  contradicting the assumption that $v_i$ does not dominate~$v_j$.
  Analogously,
  the assumption that $v_i\sqsubseteq v_j$
  contradicts the assumption that $v_j$ does not dominate~$v_i$.
\end{proof}

\begin{theorem}\label{theorem:big-kernel}
    Given an instance of Multiple Hitting Set
    $H=(V,E)$ with $f:E\rightarrow\{1,\dots,\alpha\}$,
    a problem kernel $H'=(V',E')$
    with $|V'|+|E'|\leq2\alpha\dn(\IG(H))$
    and hyperedge demands $f'=f$
    can be computed as follows:
    \begin{enumerate}[Step 1.]
    \item Apply \ref{rule:superedges} as long as possible,
    \item Apply \ref{rule:multiple-domination} as long as possible.
    \end{enumerate}
\end{theorem}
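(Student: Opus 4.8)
The plan is to verify the three requirements of a kernelization in turn. Equivalence of the input and output instances follows from the safety of \ref{SE} and \ref{MD} argued in \cref{sec:dr} (deleting a superseded hyperedge, or a vertex with enough dominators, never changes the answer), and exhaustively applying each rule takes polynomial time since every application removes a hyperedge or a vertex. Thus all the work lies in the size bound, which I would split as $|E'|\le\alpha\dn(\IG(H))$ and $|V'|\le\alpha\dn(\IG(H))$, treating the hyperedge-nodes and the vertex-nodes as two separate sub-orders of the vinical order of the incidence graph. Throughout, write $H_1$ for the hypergraph after Step~1 (with hyperedge set~$E_1$) and $H_2=H'$ for the hypergraph after Step~2.

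A preliminary fact I would establish first is that deleting vertices or hyperedges never increases the Dilworth number of the incidence graph. Both $\IG(H_1)$ and $\IG(H_2)$ are induced subgraphs of $\IG(H)$, since deleting a hyperedge or a vertex just removes the corresponding node. For an induced subgraph $G'$ of a graph $G$ and vertices $u,v$ present in $G'$, one has $N_{G'}(u)=N_G(u)\cap V(G')$ and $N_{G'}[v]=N_G[v]\cap V(G')$, so $N_G(u)\subseteq N_G[v]$ implies $N_{G'}(u)\subseteq N_{G'}[v]$; that is, comparability in $\sqsubseteq$ is preserved when passing to $G'$. Consequently every antichain of $G'$ is an antichain of $G$, which gives $\dn(\IG(H_2))\le\dn(\IG(H_1))\le\dn(\IG(H))$.

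For the hyperedge bound I would work in $\IG(H_1)$. By the first part of \cref{lem:inc}, the vinical order restricted to the hyperedge-nodes coincides with set inclusion, so a family of hyperedges that is an antichain under inclusion is an antichain of $\IG(H_1)$ and hence has at most $\dn(\IG(H_1))$ members; by Dilworth's theorem the hyperedges of $H_1$ can therefore be covered by at most $\dn(\IG(H_1))$ inclusion-chains. Here the reduction rule enters: because \ref{SE} has been applied exhaustively, along any chain $e_{i_1}\subsetneq e_{i_2}\subsetneq\dots\subsetneq e_{i_t}$ the demands must strictly increase (otherwise \ref{SE} would still delete the larger edge), and since they lie in $\{1,\dots,\alpha\}$ this forces $t\le\alpha$. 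Hence $|E_1|\le\alpha\dn(\IG(H_1))\le\alpha\dn(\IG(H))$, and as Step~2 only deletes vertices and adds no hyperedges, $|E'|\le|E_1|$.

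The vertex bound is symmetric, now in $\IG(H_2)$ and using the second part of \cref{lem:inc}: the vinical order restricted to vertex-nodes coincides with the domination relation, so a set of pairwise non-dominating vertices is an antichain, and the vertices of $H_2$ can be covered by at most $\dn(\IG(H_2))$ domination-chains. Along such a chain $v_1\sqsubseteq v_2\sqsubseteq\dots\sqsubseteq v_t$ each of $v_2,\dots,v_t$ dominates $v_1$, so $t-1\le|\Dom(v_1)|$; since \ref{MD} has been applied exhaustively, $|\Dom(v_1)|<\max_{e\in E(v_1)}f(e)\le\alpha$, giving $t\le\alpha$. Thus $|V'|\le\alpha\dn(\IG(H_2))\le\alpha\dn(\IG(H))$, and adding the two bounds yields $|V'|+|E'|\le2\alpha\dn(\IG(H))$. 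I expect the conceptual crux to be the monotonicity of the Dilworth number under induced subgraphs, since the bound is stated against $\dn(\IG(H))$ of the original instance whereas the rules are exhausted on the reduced graphs; a second delicate point is the rule-application order — one must observe that Step~2 deletes only vertices, so the inclusion-chain bound obtained after Step~1 survives even though \ref{SE} is never re-applied.
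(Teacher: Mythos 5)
Your proof is correct, but it takes a genuinely different route from the paper's. For the hyperedge bound, the paper avoids Dilworth's theorem entirely: it applies the pigeonhole principle to the demand classes $E^*_i=\{e\in E^*: f(e)=i\}$, notes that same-demand hyperedges surviving \ref{SE} are pairwise non-nested, and so, by \cref{lem:inc}, the largest class is an antichain witnessing $\dn(\IG(H))\geq|E^*|/\alpha$ directly (only the trivial direction ``an antichain lower-bounds $\dn$'' is used). You instead invoke the nontrivial direction of Dilworth's theorem to cover the hyperedges by at most $\dn(\IG(H_1))$ inclusion-chains and bound each chain's length by $\alpha$ via the strict increase of demands; both are valid, and yours has the advantage of treating the two bounds by one uniform chain-cover scheme. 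For the vertex bound, the paper takes a maximum-cardinality set $D\subseteq V'$ of vertices dominating no other vertex of $V'$, claims $V'=D\cup\bigcup_{u\in D}\Dom(u)$, and uses $|\Dom(u)|\leq\alpha-1$ from inapplicability of \ref{MD}; this decomposition is actually delicate when mutually dominating ``twin'' vertices exist (vertices with identical hyperedge sets dominate each other, may all lie outside $D$, and then need not dominate anything in $D$), whereas your domination-chain argument handles twins cleanly, since mutually dominating vertices simply share a chain. Finally, you prove explicitly the monotonicity $\dn(\IG(H'))\leq\dn(\IG(H))$ under passing to induced subgraphs of the incidence graph, a fact the paper uses silently in its chain of inequalities; making it explicit is a genuine improvement in rigor. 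One point you gloss over: Dilworth's theorem is stated for partial orders, while the domination relation on vertices is only a quasi-order (twins violate antisymmetry), so your vertex-side chain cover needs the standard quotient-by-mutual-comparability step — the same convention the paper itself adopts when defining $\dn$ via the vinical order.
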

\begin{proof}
  Let $H^*=(V,E^*)$~be the hypergraph obtained from~$H$
  by exhaustive application of \ref{SE},
  that is,
  $H^*$ is $H$ after Step 1.
  Consider $E^*_i:=\{e\in E^*:f(e)=i\}$
  for $i\in\{1,\dots,\alpha\}$. 
  Then, $\bigcup_{i=1}^{\alpha}E^*_i=E^*$.
  By the pigeonhole principle,
  $|E^*_{i^*}|\geq|E^*|/\alpha$
  for some~$i^*\in\{1,\dots,\alpha\}$.
  Since the hyperedges in $e\in E^*_{i^*}$
  have equal demand and survived \ref{SE},
  they are not contained in each other,
  neither in~$H^*$ nor in~$H$.
  Thus,
  by \cref{lem:inc}\eqref{inc1},
  $\dn(\IG(H))\geq |E^*_{i^*}|\geq |E^*|/\alpha\geq |E'|/\alpha$,
  since the second step does not increase the number of hyperedges.

  We have shown $|E'|\leq\alpha\dn(\IG(H))$.
  It remains to show~$|V'|\leq\alpha\dn(\IG(H))$.
  To this end,
  let $D\subseteq V'$
  be of maximal cardinality such
  that no vertex in~$D$ dominates
  any other vertex in~$V'$.
  Then, we can write
  \[
    V'=D\cup\bigcup_{u\in D}\Dom(u).
  \]
  Moreover,
  since \ref{MD} is inapplicable to~$H'$,
  we have $|\Dom(u)|\leq\alpha-1$ for any~$u\in D$.
  Also,
  by \cref{lem:inc}\eqref{inc2},
  the vertices in~$D$ are incomparable in
  the vinical order of~$\IG(H')$.
  Therefore,
  $|V'|\leq \alpha|D|\leq\alpha\dn(\IG(H'))\leq \alpha\dn(\IG(H))$.
\end{proof}

\noindent
We point out that there are graphs
for which the analysis provided in \cref{theorem:big-kernel}
is tight:
consider the hypergraph~$H=(V,E)$ with $n$~vertices $\{v_1,\dots,v_n\}$
and $n$~hyperedges of the form $\{v_i\}$ for~$i\in\{1,\dots,n\}$.
Its incidence graph is a disjoint union of $n$~copies of~$K_2$.
Its Dilworth number is~$n$,
none of \ref{SE} and \ref{MD} is applicable,
and one has $|V|+|E|=2n=2\alpha\dn(\IG(H))$.

\section{Efficient implementation using matrix multiplication}
\label{sec:matmult}
\noindent
\looseness=-1
In this section,
we show efficient parallel and sequential
implementations of \ref{MD} and \ref{DP},
the latter of which supersedes \ref{SE}.
The algorithms presented in this section
thus yield problem kernels for Multiple Hitting Set
via \cref{theorem:big-kernel}.

Actually, they go even further:
\cref{theorem:big-kernel}
tells us a two\hyp step recipe for computing
problem kernels for Multiple Hitting Set.
However,
after Step 2,
Step 1 may become applicable again
and
shrink the input hypergraph further.
The algorithms presented in this section
repeat the two steps until full exhaustion.
Interestingly,
the asymptotic running time of our sequential algorithm
will be the same %
regardless of whether we apply
the two steps in \cref{theorem:big-kernel} once or until exhaustion.

The key observation to both algorithms is the following.
Consider the $(m\times n)$-incidence matrix~$A$
of a hypergraph~$H$ with vertices~$v_1,\dots,v_n$
and hyperedges~$e_1,\dots,e_m$,
and the $(m\times m)$-matrix $\IE:=AA^T$.
Then,
\begin{align}
  \IE_{ij}&=\sum_{k=1}^nA_{ik}A_{jk}=|e_i\cap e_j|,
            \text{\qquad in particular,\qquad } \IE_{ii}=|e_i|.
            \label{mat-trick1}
\end{align}
Thus, if $\IE_{ij}=\IE_{ii}$,
then
we know $e_i\subseteq e_j$
and that \ref{SE} may be applicable.
More generally,
if $f(e_i)-(\IE_{ii}-\IE_{ij})\geq f(e_j)$,
we know that $e_i$ supersedes~$e_j$
and that \ref{DP} is applicable.
Similarly, for the $(n\times n)$-matrix $\IV:=A^TA$,
\begin{align}
  \IV_{ij}=\sum_{k=1}^mA_{ki}A_{kj}=|E(v_i)\cap E(v_j)|,
  \text{\qquad in particular,\qquad}
  \label{mat-trick2}
  \IV_{ii}=|E(v_i)|.
\end{align}
Thus,
if $\IV_{ij}=\IV_{jj}$,
then
$E(v_j)\subseteq E(v_i)$
and we know that~$v_i$ dominates~$v_j$ in the sense of \ref{MD}.
The trick for the efficient parallel algorithm is now that
matrix multiplication is efficiently parallelizable.
The trick for the sequential algorithm is
that the matrices~$\IE$ and~$\IV$,
once computed,
can be efficiently updated
on vertex and hyperedge deletion.

\subsection{Parallel algorithm}
\noindent
In this section,
we prove the following theorem.

\begin{theorem}\label{thm:parkern}
  Given an instance of Multiple Hitting Set
  $H=(V,E)$ with hyperedge demands
  $f:E\rightarrow\{1,\dots,\alpha\}$,
  \begin{enumerate}[(i)]
  \item\label{parkern1}   a problem kernel $H'=(V',E')$
    with $|V'|+|E'|\leq2\alpha\dn(\IG(H))$
    can be computed by an $\NCeins$ circuit family and
    
  \item\label{parkern2} $H$~can be exhaustively reduced with
    respect to both \ref{DP} and \ref{MD}
    in $O(\M(\IG(H)) \log |H|)$~time
    on $\poly(|H|)$ processors,
    where $\M(\IG(H))$ is the
    matching number of $\IG(H)$.
  \end{enumerate}
\end{theorem}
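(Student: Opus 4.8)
The plan is to read both statements off the incidence products $\IE=AA^T$ and $\IV=A^TA$ of \eqref{mat-trick1} and \eqref{mat-trick2}: every test used by \ref{DP} and \ref{MD} is a comparison between entries of these matrices, so the whole computation reduces to one matrix product plus a family of independent, per\hyp pair tests. Each entry of $\IE$ or $\IV$ is the number of ones in a coordinatewise product of two $0/1$ vectors of length at most $|H|$, i.e.\ a sum of at most $|H|$ bits, which a balanced addition tree computes in $\bigO(\log|H|)$ depth and $\poly(|H|)$ size; hence $\IE$, $\IV$, and all the subsequent comparisons, counts and maxima over $\bigO(|H|)$ terms lie in \NCeins{}. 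This handles the arithmetic; the real work for Part~(i) is to show that one round each of \ref{DP} and \ref{MD} can be made \emph{exhaustive} by a single log\hyp depth, non\hyp iterative rule, whereas Part~(ii) will interleave the two rounds until a fixpoint.

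For \ref{DP} I would first observe that ``supersedes'' is a strict partial order. It is transitive because $e_i\setminus e_k\subseteq(e_i\setminus e_j)\cup(e_j\setminus e_k)$ gives $|e_i\setminus e_k|\le|e_i\setminus e_j|+|e_j\setminus e_k|$, and adding the two superseding inequalities then yields the third; it is acyclic because $e_i$ superseding $e_j$ forces $f(e_i)\ge f(e_j)$, with equality forcing $e_i\subsetneq e_j$ and hence $|e_i|<|e_j|$, so no cycle can close. The parallel rule is then to keep exactly the \emph{sources} of this order, i.e.\ every $e_j$ that no other $e_i$ supersedes (a negated disjunction over the $\bigO(m)$ candidates, read off $\IE$ in $\bigO(\log m)$ depth), and delete the rest in one shot. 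By \cref{obs:alive-ancestors} applied to the supersession DAG, every deleted hyperedge has a surviving source that supersedes it (transitivity collapses the ancestor path to a single arc), so the deletion is safe; and no survivor supersedes another, so the result is \ref{DP}\hyp reduced, hence \ref{SE}\hyp reduced.

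For \ref{MD} the delicate point is that a naive one\hyp shot deletion of all currently deletable vertices is \emph{unsafe}: deleting an entire twin class can leave one of its hyperedges under\hyp hit. I would instead fix a linear extension of the strict domination order — rank vertices by $\IV_{vv}=|E(v)|$, breaking ties by index, so that a strict dominator always outranks the vertex it dominates — and keep $v$ iff strictly fewer than $d_v:=\max_{e\in E(v)}f(e)$ of its dominators outrank it, where $v_i$ dominates $v$ exactly when $\IV_{iv}=\IV_{vv}$. This is a per\hyp vertex count against $\IV$, again in \NCeins{}. Two facts need checking: (safety) for every deleted $v$ and every hyperedge $e\ni v$, the $d_v\ge f(e)$ highest\hyp ranked dominators of $v$ themselves survive the rule and all lie in $e$ — the engine here is the monotonicity $E(v)\subseteq E(u)\Rightarrow d_u\ge d_v$ forced by domination; and (exhaustiveness) a surviving $v$ keeps at most $d_v-1$ dominators, since its surviving strict dominators number at most $h<d_v$ and its surviving twins at most $d_v-1-h$, where $h$ is its number of strict dominators. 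Part~(i) then follows by running the \ref{DP} round and, on the resulting row\hyp reduced incidence matrix, the \ref{MD} round: the output is simultaneously \ref{SE}\hyp and \ref{MD}\hyp reduced, so the counting in the proof of \cref{theorem:big-kernel} (with \ref{DP} in place of \ref{SE}, which only deletes more) gives $|V'|+|E'|\le 2\alpha\dn(\IG(H))$, and the pipeline is a constant number of log\hyp depth phases, hence an \NCeins{} circuit family.

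For Part~(ii) I would iterate the two rounds until a full round changes nothing; at that point the instance is reduced with respect to both \ref{DP} and \ref{MD}. Recomputing $\IE,\IV$ and the keep/delete tests costs $\bigO(\log|H|)$ time on $\poly(|H|)$ processors per round, so everything hinges on bounding the number of rounds by $\bigO(\M(\IG(H)))$, and this is the main obstacle. The plan is a charging argument: a vertex deleted by \ref{MD} in round $t$ must have lost a hyperedge $e\ni v$ in round $t-1$ (shrinking $E(v)$ or lowering $d_v$), and a hyperedge deleted by \ref{DP} in round $t$ must have had a vertex of its superseder deleted in round $t-1$; tracing the cascade backwards yields one such ``trigger'' incidence per round, and I would argue that these incidences contain a matching of $\IG(H)$ of size at least the number of rounds minus a constant, whence that number is $\bigO(\M(\IG(H)))$. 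Establishing the vertex\hyp and hyperedge\hyp disjointness of the trigger incidences — and thus the matching lower bound — is the crux of Part~(ii).
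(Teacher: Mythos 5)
Your Part~(i) is correct and follows essentially the paper's route: compute $\IE=AA^T$ and $\IV=A^TA$ in \NCeins{}, apply \ref{DP} and \ref{MD} once each as a single one\hyp shot, log\hyp depth rule with a tie\hyp breaker, and invoke \cref{theorem:big-kernel} (noting that \ref{DP}\hyp exhaustion implies \ref{SE}\hyp exhaustion). The differences are stylistic rather than structural. Where you argue that supersedence itself is a DAG and keep its sources (valid under the paper's set semantics $E\subseteq 2^V$, where distinct hyperedges are distinct sets and mutual supersedence is therefore impossible), the paper keeps the sources of an auxiliary digraph with an index tie\hyp breaker, which also covers inputs with duplicate hyperedges. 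Your \ref{MD} rule is in fact identical to the paper's: ``dominators that outrank $v$'' is exactly the paper's relation ``strict dominators plus lower\hyp index twins,'' and your survival argument for the $d_v$ highest\hyp ranked dominators (driven by $E(v)\subseteq E(u)\Rightarrow d_u\geq d_v$ and transitivity) together with the twin\hyp class count for exhaustiveness is a legitimate, arguably cleaner, substitute for the paper's argument via \cref{obs:alive-ancestors} and a maximal\hyp index contradiction.

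Part~(ii), however, has a genuine gap, and you flag it yourself: you never construct the matching, you only conjecture that the ``trigger incidences'' contain one. The missing idea is the right choice of trigger pair, and your off\hyp by\hyp one bookkeeping is what hides it. Since \ref{DP} runs before \ref{MD} inside each round, a vertex $v$ newly deleted by \ref{MD} in round~$\ell$ must have lost an incident hyperedge \emph{in the same round}~$\ell$, not in round~$\ell-1$: \ref{MD} was exhausted at the end of round~$\ell-1$, its applicability to~$v$ can only be newly created by deleting a hyperedge of~$E(v)$ (deleting hyperedges not containing~$v$ neither enlarges $\Dom(v)$ nor lowers $\max_{e\in E(v)}f(e)$, and deleting vertices only shrinks $\Dom(v)$), and the only deletions between the two \ref{MD} runs are those of \ref{DP} in round~$\ell$. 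Hence each such~$v$ pairs with a hyperedge $e\ni v$ deleted in the same round, i.e.\ an edge of $\IG(H)$ \emph{both} of whose endpoints die in round~$\ell$; disjointness across rounds is then automatic, because each vertex and each hyperedge is deleted exactly once. Combined with the observation that every changing round $\ell\geq 2$ must see a \ref{DP} deletion (otherwise the hypergraph entering \ref{MD} is already \ref{MD}\hyp exhausted), which in turn forces an \ref{MD} deletion in round~$\ell-1$, these pairs form a matching of size at least the number of rounds minus a constant, so the number of rounds is at most $\M(\IG(H))+O(1)$. This is exactly the charging argument the paper uses to prove \cref{thm:parkern}\eqref{parkern2}; without it, your proposal bounds only the cost per round, not the number of rounds, and so does not yield the claimed $O(\M(\IG(H))\log|H|)$ bound.
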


\noindent
\looseness=-1
The proof of \cref{thm:parkern} works as follows.
\cref{alg:par-edges}
exhaustively applies~\ref{DP},
thus realizes Step 1 of \cref{theorem:big-kernel}.
\cref{alg:par-vertices}
exhaustively applies~\ref{MD},
thus realizes Step 2 of \cref{theorem:big-kernel}.
Thus,
implementing \cref{alg:par-edges}
and \cref{alg:par-vertices}
as \NCeins{} circuit families,
we will prove \cref{thm:parkern}\eqref{parkern1}.
Then,
we will prove
\cref{thm:parkern}\eqref{parkern2}
using \cref{alg:par},
which applies \cref{alg:par-edges,alg:par-vertices}
until the hypergraph does not change.

\begin{algorithm}
  \label[algorithm]{alg:mat}
  \SetKwInOut{Input}{Input}
  \SetKwInOut{Output}{Result}
  \SetKwRepeat{Do}{do}{while}
  \SetKwFor{ForeachParallel}{foreach}{in parallel}
  
  \Input{Incidence matrix~$\mat A$ of a hypergraph $H = (V, E)$
    and hyperedge demands~$f\colon E\to\N$.}
  \Output{Incidence matrix of the hypergraph
    obtained via exhaustive application of
    \ref{DP} to~$H$.}

  $\IE \gets AA^T$.\tcp*{$\IE_{ij}=|e_i\cap e_j|$, as in \eqref{mat-trick1}}
  $D \gets (m\times m)$-matrix of all zeroes.\;
  $R \gets$ column vector of $m$ ones.\tcp*{$R_j=1\iff e_j$ will be in the output}
  \ForeachParallel(\tcp*[f]{$D_{ij}=1\iff e_i$ supersedes $e_j$}){$1\leq i,j\leq m$}{%
    \lIf(\nllabel{lin:dij}){$f(e_i) - (\IE_{ii} - \IE_{ij}) \geq f(e_j)$}{$D_{ij}\gets 1$.}}
  \ForeachParallel(\tcp*[f]{Exhaustive application of \ref{DP}}){$1\leq i,j\leq m$}{%
    \lIf(\nllabel{lin:dij1}){$(D_{ij}=1)\wedge((D_{ji}=0)\vee(i<j)))$}{$R_{j}\gets 0$.}
  }
  \Return{rows~$j$ of $A$ for which $R_j=1$.}
  \nllabel{lin:ri}

  \caption{Parallel algorithm for exhaustive application
    of \ref{DP}.}
  \label[algorithm]{alg:par-edges}
\end{algorithm}

The main challenge with the proof of \cref{thm:parkern} is that,
although the data reduction rules in \cref{sec:dr} are correct
when applied sequentially,
their independent parallel application may be wrong:
for example,
the algorithm may find that
a hyperedge~$e_j$ supersedes
a hyperedge~$e_i$ and vice versa,
and delete both.

\cref{alg:par-edges} applies \ref{DP}
using matrix multiplication and
\eqref{mat-trick1} %
to compute which hyperedges are superseded.
Herein,
the algorithm contains a tie breaker:
if the algorithm finds two hyperedges
superseding each other,
it deletes only the hyperedge with higher index.
We still have to show that
an application of \ref{DP} in this form
is correct and exhaustive.

\begin{lemma}\label[lemma]{alg1:correct}
  \cref{alg:par-edges} is correct.
\end{lemma}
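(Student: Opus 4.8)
The plan is to establish two things about \cref{alg:par-edges}: that it deletes only hyperedges that are safely removable (soundness), and that it deletes all hyperedges that the exhaustive sequential application of \ref{DP} would delete (completeness/exhaustiveness). First I would fix notation: let $A$ be the input incidence matrix and $\IE=AA^T$, so that by \eqref{mat-trick1} the test on line~\ref{lin:dij}, namely $f(e_i)-(\IE_{ii}-\IE_{ij})\geq f(e_j)$, is exactly the condition $f(e_i)-|e_i\setminus e_j|\geq f(e_j)$ defining ``$e_i$ supersedes $e_j$''. Hence after the first parallel loop we have $D_{ij}=1$ if and only if $e_i$ supersedes $e_j$. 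The output retains exactly the rows $j$ with $R_j=1$, and by line~\ref{lin:dij1} we have $R_j=0$ precisely when there exists some $i$ with $D_{ij}=1$ and either $D_{ji}=0$ or $i<j$.

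For \textbf{soundness}, I would argue that the set of surviving hyperedges is a valid kernel, i.e.\ deleting the removed hyperedges does not change the answer. The subtlety is the tie-breaker, so I would reason about the two cases a deletion can arise from. If $e_j$ is deleted because some $e_i$ supersedes it with $D_{ji}=0$ (one-directional supersession), then $e_i$ does \emph{not} supersede $e_j$ back, so $e_i$ is not deleted on account of $e_j$; I would check that $e_i$ itself survives (or, if $e_i$ is also deleted by yet another hyperedge, that deletion chains resolve consistently). If $e_j$ is deleted because $e_i$ supersedes it, $e_j$ supersedes $e_i$, and $i<j$, then only the higher-indexed one is dropped while $e_i$ is kept. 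The key point to verify is that the deletions never remove \emph{both} members of a mutually-superseding pair and never leave a superseded edge ``orphaned'' — concretely, I would show that for every deleted $e_j$ there remains in the output some hyperedge that still supersedes it, so the demand on $e_j$ is implied by the surviving instance. Because supersession between equal-demand comparable edges is essentially a transitive/antisymmetric-up-to-ties relation, I expect to argue that the surviving set is a set of ``maximal'' representatives.

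For \textbf{exhaustiveness}, I would show that no surviving pair of hyperedges permits a further application of \ref{DP}: if both $e_i$ and $e_j$ survive (so $R_i=R_j=1$), then neither supersedes the other in the reduced instance. Here I must be careful that supersession is evaluated on the \emph{original} intersections $|e_i\cap e_j|$, and deleting hyperedges does not change the cardinalities of the surviving ones, so the $\IE$ values for surviving rows remain correct for the reduced hypergraph — this is what makes a single parallel pass equivalent to exhaustive sequential application. I would then show that if $e_i$ superseded $e_j$ among survivors, line~\ref{lin:dij1} would have set $R_j=0$ (using the tie-breaker to rule out the symmetric case), a contradiction.

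The main obstacle I anticipate is precisely the interaction between the tie-breaker and chains of supersession: in the sequential rule, deleting one hyperedge can make a previously blocked deletion of another valid or invalid, whereas the parallel algorithm commits to all deletions simultaneously based on the original matrix. The crux of the proof will be to show that this simultaneous commitment coincides with some valid sequential order — likely by arguing that supersession (restricted to the surviving edges, or ordered by demand and index) induces a partial order whose maximal elements are exactly the survivors, so that the set of deletions is order-independent. Getting the edge cases of mutual supersession with equal demand exactly right, so that exactly one representative of each such equivalence class survives, is where the careful bookkeeping lies.
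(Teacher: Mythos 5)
Your overall skeleton matches the paper's proof: the paper also splits the argument into (a) hitting sets for the output are hitting sets for the input, established by showing that every deleted hyperedge retains a \emph{surviving} hyperedge that supersedes it, and (b) exhaustiveness, established exactly as you say, by noting that the supersedence test depends only on the two hyperedges involved (their demands and intersection), so the precomputed $\IE$ entries stay valid for surviving rows and any surviving superseding pair would already have triggered a deletion in \cref{lin:dij1}. Concretely, the paper defines a directed graph with an arc $(e_i,e_j)$ whenever $e_i$ may cause the deletion of $e_j$ (your ``deletion relation''), shows it is acyclic, and observes that the survivors are precisely its sources; every deleted edge then has a source ancestor, which survives and supersedes it.

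The genuine gap in your proposal is that everything above hinges on one technical fact that you assert but never prove: \emph{transitivity of supersedence}. You write that ``supersession between equal-demand comparable edges is essentially a transitive/antisymmetric-up-to-ties relation,'' but \ref{DP} applies to hyperedges that are neither nested nor of equal demand, and it is exactly in this general setting that transitivity must be shown --- it is the heart of the paper's proof. The paper proves it from the set-theoretic triangle inequality $|e_k\setminus e_i|\le|e_k\setminus e_j|+|e_j\setminus e_i|$ by adding the two supersedence inequalities; only this makes the deletion digraph acyclic (equivalently, makes your quotient-by-mutual-supersedence a genuine partial order with well-defined maximal representatives). Without transitivity, your key soundness claim can fail in principle: a chain or cycle of deletions $e_1\to e_2\to e_3$ could leave $e_3$ deleted while the only survivor $e_1$ does \emph{not} supersede $e_3$, i.e.\ a deleted edge would be ``orphaned'' in exactly the sense you say must not happen. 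So the plan is the right one, but the one lemma that makes it go through --- transitivity of supersedence for arbitrary demands and non-nested edges --- is missing, and your restriction of it to ``equal-demand comparable edges'' suggests it was being conflated with the much easier rule \ref{SE}.
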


\begin{proof}
  Let $H$~be the input hypergraph and~$H'$ the output hypergraph.
  We prove that
  \begin{enumerate}[(i)]
  \item\label{pec1} every multiple hitting set for~$H$ is one for~$H'$,
    
  \item\label{pec2} every multiple hitting set for~$H'$ is one for~$H$,
    
  \item\label{pec3} $H'$ is exhaustively reduced with respect to \ref{DP}.
  \end{enumerate}

  \noindent
  Claim~\eqref{pec1} is trivial since $H'$ is a sub-hypergraph of~$H$ and
  the algorithm does not change edge demands.
  For \eqref{pec2},
  we first prove that hyperedge supersedence is transitive.
  Let $e_i,e_j,e_k\in E$ be hyperedges.
  Observe that
  $e_k\setminus e_i \subseteq (e_k\setminus e_j)\cup(e_j\setminus e_i)$ and,
  thus,
  \begin{align}
  |e_k\setminus e_i| \leq |e_k\setminus e_j|+|e_j\setminus e_i|.
  \label{eq:intersections}
  \end{align}
  Assume that $e_i$ supersedes $e_j$ and that $e_j$ supersedes $e_k$,
  that is,
  $f(e_j) - |e_j\setminus e_i|\geq f(e_i)$ and
  $f(e_k) - |e_k\setminus e_j|\geq f(e_j)$.
  Adding up the two inequalities, we get
  $f(e_k) - (|e_k\setminus e_j|+|e_j\setminus e_i|)\geq f(e_i)$,
  and,
  using \eqref{eq:intersections},
  $f(e_k) - |e_k\setminus e_i|\geq f(e_i)$.
  Thus, $e_i$ supersedes $e_k$.
  In other words,
  in \cref{lin:dij} of \cref{alg:par-edges},
  we have that
  \begin{align}
    D_{ij}=1\text{ and }D_{jk}=1\text{ implies }D_{ik}=1.
    \label{eq:trans}
  \end{align}
  Now,
  consider the directed graph~$G$
  whose vertices are the hyperedges of~$H$
  and that contains an arc~$(e_i,e_j)$
  whenever, %
  in \cref{lin:dij1},
  \[
    D_{ij}=1\wedge((D_{ji}=0)\vee (i<j)).
  \]
  That is,
  if $e_i$ may cause the deletion of~$e_j$ in \cref{lin:dij1},
  then $(e_i,e_j)$ is an arc.
  Due to \eqref{eq:trans},
  this directed graph is acyclic.

  Finally,
  let $S$~be a multiple hitting set for~$H'$
  and let $e_j$~be a hyperedge in~$H$ that is not in~$H'$.
  Then $e_j$~has some source~$e_i$ as ancestor in~$G$.
  Nothing causes the deletion of~$e_i$, so it exists in~$H'$
  and, moreover,
  $e_i$ supersedes~$e_j$.
  Thus,
  since $S$~satisfies the demand of~$e_i$,
  it also satisfies the demand of~$e_j$.

  \eqref{pec3}
  Assume that $H'$~contains two hyperedges~$e_i$ and~$e_j$
  such that~$e_i$ supersedes~$e_j$.
  Then there is an arc between~$e_i$ and~$e_j$
  in the directed acyclic graph~$G$,
  contradicting the fact that
  $H'$~contains only hyperedges that are sources in~$G$. 
\end{proof}

\noindent
We have shown a parallel algorithm for exhaustively applying \ref{DP}.
To prove a parallel kernelization algorithm using \cref{theorem:big-kernel},
we still need to exhaustively apply \ref{MD},
for which we apply \cref{alg:par-vertices}.
It %
checks vertex dominance using matrix multiplication
and \eqref{mat-trick2}.
Again, a tie breaker is applied:
if it finds that a vertex~$v_i$ dominates a vertex~$v_j$
and vice versa,
then the vertex with lower index is considered
to dominate the vertex of higher index.

\begin{algorithm}
  \SetKwInOut{Input}{Input}
  \SetKwInOut{Output}{Result}
  \SetKwRepeat{Do}{do}{while}
  \SetKwFor{ForeachParallel}{foreach}{in parallel}
  
  \Input{Incidence matrix~$\mat A$ of a hypergraph $H = (V, E)$
    and hyperedge demands~$f\colon E\to\N$.}
  \Output{Incidence matrix of the hypergraph
    obtained from~$H$ via exhaustive application of
    \ref{MD}.}
  
  $\IV \gets A^TA.$\tcp*{$I_{ij}=|E(v_i)\cap E(v_j)|$, as by \eqref{mat-trick2}}
  $R\gets$ column vector of $n$~ones.\tcp*{$R_j=1\iff v_j$ will be in the output} 
  $D\gets (n\times n)$-matrix of all zeroes.\;
  \ForeachParallel(\tcp*[f]{$D_{ij}=1\implies v_i$ dominates~$v_j$}){$1\leq i,j\leq n$}{%
    \lIf(\nllabel{lin:dvij}){$(I_{ij}=I_{jj})\wedge((I_{ij}\ne I_{ii})\vee(i<j))$}{$D_{ij}\gets 1$.}
  }
    $C\gets (1,\dots,1)\cdot D.$\tcp*{$C_{j}=$ number of~$i$'s such that $D_{ij}=1$}\nllabel{lin:cjj}
  \ForeachParallel(\tcp*[f]{Delete $v_j$ if $C_{j}\geq \max_{e\in E(v_i)}f(e)$}){$1\leq j\leq n$}{%
    \lIf(\nllabel{lin:rvj}){$\bigwedge_{i=1}^m(A_{ij}=0\vee C_{j}\geq f(e_i))$}{$R_j\gets 0$.}
  }
  \Return{columns~$j$ of $A$ for which $R_j=1$.}
  \caption{Parallel algorithm for exhaustive application of \ref{MD}.}
  \label{alg:par-vertices}
\end{algorithm}

\begin{lemma}\label[lemma]{alg2:correct}
  \cref{alg:par-vertices} is correct.
\end{lemma}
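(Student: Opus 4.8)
I would mirror the structure of the proof of \cref{alg1:correct} and establish three facts about the output $H'=(V',E')$ of \cref{alg:par-vertices}: (i) every multiple hitting set for $H'$ is one for $H$; (ii) if $H$ has a multiple hitting set of size at most $k$, then so does $H'$; and (iii) $H'$ is exhaustively reduced with respect to \ref{MD}. The crucial structural remark is that \ref{MD} only deletes vertices: every hyperedge survives, merely losing its deleted vertices, and therefore $E(v)$ is unchanged for every surviving $v\in V'$. Consequently, both the dominance relation among surviving vertices and the threshold $d_j:=\max_{e\in E(v_j)}f(e)$ are identical in $H'$ and in $H$, which is exactly what will make (iii) manageable. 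Claim~(i) is then immediate: for $S\subseteq V'$ and any $e\in E$ we have $S\cap(e\cap V')=S\cap e$, so $|S\cap e|\ge f(e)$, and $S$ hits $H$ at the same cardinality.

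Next I would set up the acyclic structure underlying the tie-breaker, just as in the edge case. Writing $D_{ij}=1$ exactly when \cref{lin:dvij} fires — that is, when $v_i$ dominates $v_j$ and, in case of mutual dominance $E(v_i)=E(v_j)$, additionally $i<j$ — and putting $\Dom^\ast(v_j):=\{v_i:D_{ij}=1\}$ so that $C_j=|\Dom^\ast(v_j)|$, I would verify that $D$ is the arc set of a directed acyclic graph $G_V$ on $V$. It is antisymmetric (both $D_{ij}=1$ and $D_{ji}=1$ would force $E(v_i)=E(v_j)$ and hence $i<j<i$), and any directed cycle would, by transitivity of dominance, make all its vertices share a single neighborhood and so force strictly increasing indices around the cycle. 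Since $C_j$ is the in-degree of $v_j$ in $G_V$, every source has $C_j=0<d_j$ and survives; in particular, every deleted vertex has a predecessor in $G_V$.

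For (ii) I would take a minimum multiple hitting set $S$ of $H$ (of size at most $k$) that, among all minimum ones, minimizes the total $G_V$-depth $\sum_{v_j\in S\setminus V'}\mathrm{depth}(v_j)$ of the deleted vertices it uses, where $\mathrm{depth}$ is the length of a longest path from a source. If some deleted $v_j\in S$ had all its tie-broken dominators in $S$, then — since each such dominator lies in every $e\in E(v_j)$ and there are $C_j\ge d_j\ge f(e)$ of them — the set $S\setminus\{v_j\}$ would already hit $H$, contradicting minimality. Otherwise pick a tie-broken dominator $v_i\notin S$; replacing $v_j$ by $v_i$ yields a minimum hitting set (as $E(v_j)\subseteq E(v_i)$), but since $(v_i,v_j)$ is an arc of $G_V$ we have $\mathrm{depth}(v_i)<\mathrm{depth}(v_j)$, so the potential strictly drops — again a contradiction. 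Hence $S\subseteq V'$, and as in (i) it is a multiple hitting set of $H'$ of size at most $k$.

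The step I expect to be the real obstacle is (iii): showing that a single parallel sweep already exhausts \ref{MD}, despite the dominator counting. Fixing a surviving $v_j\in V'$, let $p$ be its number of proper dominators (those $v_i$ with $E(v_j)\subsetneq E(v_i)$) and let $v_{i_1},\dots,v_{i_t}$, in increasing index order, be the vertices sharing its neighborhood. By the tie-breaker, $C_{i_s}=p+(s-1)$, so the survivors of this class are precisely a prefix $v_{i_1},\dots,v_{i_{s_{\max}}}$, and the top survivor satisfies $C_{i_{s_{\max}}}=p+(s_{\max}-1)<d_j$. Now $\Dom_{H'}(v_j)=\Dom(v_j)\cap V'$ consists of at most $p$ surviving proper dominators together with the $s_{\max}-1$ surviving class members other than $v_j$, whence $|\Dom_{H'}(v_j)|\le p+(s_{\max}-1)=C_{i_{s_{\max}}}<d_j$, so \ref{MD} is inapplicable to $v_j$. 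The subtle point — and precisely where the tie-breaker is essential — is that $\Dom(v_j)$ can be strictly larger than the counted set $\Dom^\ast(v_j)$, because it also includes higher-indexed equal-neighborhood vertices; the tie-breaker confines those extra members to a bounded index-prefix, guaranteeing that the surviving dominator count of any survivor never reaches its demand.
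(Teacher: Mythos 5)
Your proof is correct, and while it keeps the paper's skeleton---the same three claims and the same acyclic tie-breaker digraph (your $G_V$ is the paper's $G$, with the same acyclicity argument)---the two substantive steps are argued differently. For ``a hitting set of $H$ yields one of $H'$ of no larger size,'' the paper applies \cref{obs:alive-ancestors} to the set of deleted vertices: a deleted $v_j\in S$ has a deleted ancestor $v_i$ all of whose parents survive; since $v_i$ was deleted, it has at least $\max_{e\in E(v_i)}f(e)\geq\max_{e\in E(v_j)}f(e)$ surviving parents, each dominating $v_j$, so $v_j$ can be swapped out or dropped. Your extremal argument (a minimum hitting set minimizing the total depth of its deleted vertices, improved by a depth-decreasing local exchange) reaches the same conclusion; it is a bit heavier, but it makes the iteration of replacements, which the paper leaves implicit, explicit and rigorous. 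The sharper difference is in exhaustiveness with respect to \ref{MD}: the paper proceeds by contradiction, taking a surviving $v_j$ with a surviving dominator set $X$ of size at least $\max_{e\in E(v_j)}f(e)$, choosing the maximum index $i$ with $v_i\in X$ and $D_{ij}=0$, deducing $E(v_i)=E(v_j)$, and then showing that $v_i$ itself would have been deleted---a delicate case analysis. You instead compute directly: within an equal-neighborhood class the counts $C_{i_s}=p+(s-1)$ increase strictly along the index order, so each class's survivors form an index prefix, and any survivor's dominator count in $H'$ is at most $C_{i_{s_{\max}}}<\max_{e\in E(v_j)}f(e)$ (using your observation that dominance and demand thresholds among survivors are unchanged in $H'$, a fact the paper uses only implicitly). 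This prefix characterization avoids the case analysis, makes the role of the tie-breaker transparent, and even yields an exact description of which vertices survive; it would be a legitimate, arguably cleaner, substitute for the paper's proof of its part (iii).
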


\begin{proof}
    Let $H$~be the input hypergraph and~$H'$ the output hypergraph.
  We prove that
  \begin{enumerate}[(i)]
  \item\label{pvc1} for any multiple hitting set for~$H$,
    there is a multiple hitting set of at most the same size for~$H'$,
    
  \item\label{pvc2} every multiple hitting set for~$H'$ is one for~$H$,
    
  \item\label{pvc3} $H'$ is exhaustively reduced with respect to \ref{DP}.
  \end{enumerate}

  \noindent
  Claim \eqref{pvc2} follows
  since every vertex of~$H'$ is also in~$H$
  and both hypergraphs have the same hyperedges and demands.

  Towards \eqref{pvc1},
  consider a directed graph~$G=(V,A)$
  and having an arc~$(v_i,v_j)\in A$ whenever
  $D_{ij}=1$ after \cref{lin:dvij}.
  Since a vertex~$v_i$ dominates
  a vertex~$v_j$ if and only if~$I_{jj}=I_{ij}$,
  we get that
  $(v_i,v_j)\in A$ if and only if
  $v_i$ dominates~$v_j$ and,
  if $v_j$ also dominates~$v_i$,
  then $i<j$.
  The graph $G$~is acyclic:
  if there was a cycle~$L$,
  then any two vertices on~$L$ would dominate each other due to the transitivity
  of vertex dominance.
  Since the vertex index cannot always increase along the cycle,
  there is an edge $(v_i, v_j)$ on~$L$ such that $i>j$
  and the vertices dominate each other.
  This contradicts the rules by which we built~$G$.

  Now,
  assume that a multiple hitting set~$S$ for~$H$ contains
  some vertex~$v_j$ that is not in~$H'$.
  By \cref{obs:alive-ancestors},
  $v_j$~has some ancestor~$v_i$
  such that $v_i$~is not in~$H'$
  but all its ancestors in~$G$ are in~$H'$
  (possibly, $i=j$).
  Since any ancestor of~$v_i$ is also one of~$v_j$,
  and $v_i$ dominates~$v_j$,
  that is, $E(v_j)\subseteq E(v_i)$,
  it follows that there are at least
  \[
    \max_{e\in E(v_{i})}f(e)\geq \max_{e\in E(v_j)}f(e)
  \]
  ancestors of~$v_j$ left in~$H'$.
  We can replace~$v_j$ by one of them in~$S$ to get a
  multiple hitting set for~$H'$;
  if all of these ancestors are already in $S$,
  then we do not need~$v_j$ in the multiple hitting set at all.

  \eqref{pvc3} Assume that~$H'$ contains a vertex~$v_j$
  to which \ref{MD} is applicable.
  That is,
  $H'$~contains a set~$X$
  of at least $\max_{e\in E(v_j)}f(e)$ dominators of~$v_j$.
  If $D_{ij}=1$ for each~$v_i\in X$ in \cref{lin:dvij},
  then $C_{j}\geq|X|\geq\max_{e\in E(v_j)}f(e)$ in \cref{lin:cjj}
  and $v_j$~would have been deleted in \cref{lin:rvj}.
  Thus,
  there is some maximum~$i$
  such that $v_i\in X$ and $D_{ij}=0$.
  Since $v_i$ dominates~$v_j$ but $D_{ij}=0$,
  we know $D_{ji}=1$ and $j<i$.
  Thus,
  $v_j$ also dominates~$v_i$,
  that is,
  $E(v_j)=E(v_i)$.

  We now show that,
  in contradiction to the assumption that all vertices in~$X$
  are in~$H'$,
  \cref{alg:par-vertices}
  would have deleted~$v_i$ from~$H'$ in \cref{lin:rvj}.
  To this end,
  we show that
  $D_{ki}=1$
  for any~$v_k\in X'=(X\setminus\{v_i\})\cup\{v_j\})$ and, thus,
  in \cref{lin:cjj},
  \[
    C_{i}\geq|X'|=|X|\geq \max_{e\in E(v_j)}f(e)=\max_{e\in E(v_i)}f(e).
  \]
  For $k=j$,
  since $v_i$ dominates~$v_j$ and~$D_{ij}=0$,
  we have
  $D_{ki}=D_{ji}=1$.
  For $k\ne j$,
  $v_k$ dominates~$v_j$,
  since $v_k\in X$.
  Since $v_j$ dominates~$v_i$,
  we also know that $v_k$ dominates~$v_i$.
  If $D_{ki}=1$,
  then we are done.
  Otherwise,
  if $D_{ki}=0$,
  then we know $D_{ik}=1$, $j<i<k$,
  and $E(v_i)=E(v_j)=E(v_k)$.
  It follows that $v_k$ and $v_j$ dominate each other,
  that $j<k$, and thus $D_{kj}=0$ and $D_{jk}=1$.
  This contradicts the choice of~$i$.
\end{proof}

\noindent
We have proved the correctness of
\cref{alg:par-edges,alg:par-vertices}.
Together with a complexity analysis of these algorithms,
this will yield \cref{thm:parkern}\eqref{parkern1}.
We will apply them repeatedly as long as possible
to prove \cref{thm:parkern}\eqref{parkern2}.

\begin{algorithm}
  \SetKwInOut{Input}{Input}
  \SetKwInOut{Output}{Result}
  \SetKwRepeat{Do}{do}{while}
  \SetKwFor{ForeachParallel}{foreach}{in parallel}
  
  \Input{Incidence matrix~$\mat A$ of a hypergraph $H = (V, E)$
    and hyperedge demands~$f\colon E\to\N$.}
  \Output{Incidence matrix of the hypergraph obtained from~$H$
    via exhaustive application of \ref{DP} and \ref{MD}.}
  
  \Do{$A$ changes.}{
    $A'\gets$ apply \cref{alg:par-edges} to~$A$.\;
    $A\gets$ apply \cref{alg:par-vertices} to~$A'$.\;
  }
  \Return{$A$.}
  \caption{Algorithm for the proof of \cref{thm:parkern}\eqref{parkern2}.}
  \label{alg:par}
\end{algorithm}

\begin{proof}[Proof of \cref{thm:parkern}]
  \eqref{parkern1}
  By \cref{theorem:big-kernel,alg1:correct,alg2:correct},
  one application of \cref{alg:par-edges}
  followed by one application of \cref{alg:par-vertices}
  is enough to produce a problem kernel of the desired size.
  We argue that both algorithms can be realized
  by an $\NCeins$-circuit family.
  The key point is that
  integer multiplication, addition, subtraction, comparison,
  and integer matrix multiplication
  can be realized as \NCeins{}-circuits \citep{Coo85}.
  The parallel for loops
  can also be realized by \NCeins{}-circuits,
  using one subcircuit for each pair~$i,j$.
  Herein,
  the only thing noteworthy is that
  the ``$\wedge$''-operator
  in \cref{lin:rvj}
  of \cref{alg:par-vertices}
  can be realized by a tree
  of binary ``$\wedge$''-operators
  of depth $\log m$.

  \eqref{parkern2}
  It remains to analyze the number of iterations
  of \cref{alg:par}.
  Each iteration
  exhaustively applies first \ref{DP},
  then exhaustively applies \ref{MD}.
  If,
  during the $\ell$-th iteration,
  some hyperedge $e_j \in E$ is removed,
  then this is due to some hyperedge $e_i$
  superseding~$e_j$ at iteration~$\ell$
  but not at iteration~$\ell-1$.
  Thus,
  at iteration $\ell-1$, some vertex $v$
  contained in $e_i \setminus e_j$ was removed,
  so as to satisfy the condition
  $f(e_i) - |e_i\setminus e_j| \geq f(e_j)$
  at iteration~$\ell$.

  Consequently, for any iteration~$\ell$,
  except the first one,
  there is a pair $p_\ell = \{v^{(\ell)}, e^{(\ell)}\}$
  of a vertex~$v^{(\ell)}$ and a hyperedge~$e^{(\ell)} \in E$
  such that $v^{(\ell)} \in e^{(\ell)}$,
  both of which are deleted by the
  end of iteration~$\ell$.
  Observe that $p_\ell$~is an edge
  in the incidence graph~$\IG(H)$
  and that
  any two such edges~$p_i$ and~$p_j$ for $i<j$
  are disjoint.
  Consequently, the pairs~$p_i$
  form a matching in the incidence graph $\IG(H)$
  and the number of iterations cannot exceed~$\M(\IG(H)) + 1$.
\end{proof}

\subsection{Sequential algorithm}
\noindent
In this section,
we present a sequential algorithm
that exhaustively reduces hypergraphs
with respect to both \ref{DP} and \ref{MD}.
Its  running time is quadratic at worst
and matches the running time that one would expect
from first applying \ref{DP} exhaustively
and then applying \ref{MD} exhaustively,
which may be required to be repeated $\M(\IG(H))$ times
for an exhaustive application of both,
as we have seen in the proof of \cref{thm:parkern}.
Also note that the running time is subquadratic for sparse hypergraphs.

\begin{theorem}\label{thm:ser}
  \ref{DP} and \ref{MD}
  can be exhaustively applied
  in $O(|H|\cdot (|V|+|E|))$ time.
\end{theorem}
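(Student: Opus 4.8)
The plan is to maintain the two incidence\hyp product matrices $\IE=AA^T$ and $\IV=A^TA$ from \eqref{mat-trick1} and \eqref{mat-trick2} throughout the reduction, using them both to test applicability of \ref{DP} and \ref{MD} in constant time per candidate pair and to re\hyp derive applicability cheaply after each deletion. First I would build $\IE$ and $\IV$, but \emph{not} by naive matrix multiplication, which would be too slow. Instead, $\IE_{ij}=|e_i\cap e_j|$ is accumulated by iterating over every vertex~$v$ and incrementing $\IE_{ab}$ for each pair $e_a,e_b\in E(v)$; this costs $\sum_v|E(v)|^2\le m\sum_v|E(v)|\le m|H|$, since $|E(v)|\le m$ and $\sum_v|E(v)|=\sum_e|e|\le|H|$. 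Symmetrically, $\IV$ is built by iterating over hyperedges at cost $\sum_e|e|^2\le n|H|$. Both bounds lie within the target $O(|H|\cdot(|V|+|E|))$, as does a single initial scan of all $O(m^2)$ edge pairs and $O(n^2)$ vertex pairs for the first applicability check, since $n,m\le|H|$ for nonempty hyperedges and hence $m^2,n^2\le|H|(n+m)$.

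The core of the algorithm is the incremental maintenance. Deleting a vertex~$v$ by \ref{MD} decrements $\IE_{ab}$ for every pair $e_a,e_b\in E(v)$ and leaves $\IV$ untouched; dually, deleting a hyperedge~$e$ by \ref{DP} decrements $\IV_{ab}$ for every pair $v_a,v_b\in e$ and leaves $\IE$ untouched. The key locality observation, which I would establish first, is that these are the only deletions that can create new applicability: a hyperedge~$e_j$ can become newly superseded only after a vertex of $e_i\setminus e_j$ is deleted, hence only by a superseder $e_i\in E(v)$; and a vertex~$v_j$ can become newly dominated, or have its maximum incident demand drop, only after the deletion of a hyperedge $e\ni v_j$. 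Consequently, after deleting a vertex~$v$ I re\hyp test \ref{DP} only for the superseders $e_i\in E(v)$ against all surviving hyperedges, at cost $O(|E(v)|\cdot m)$, and after deleting a hyperedge~$e$ I re\hyp test \ref{MD} only for the vertices $v_j\in e$, recomputing each dominator count $|\{i:\IV_{ij}=\IV_{jj}\}|$ by one scan over the surviving vertices at cost $O(|e|\cdot n)$. Detected deletions are performed immediately, and their consequences are propagated through a worklist.

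The running\hyp time bound then follows from a charging argument. Every unit of matrix\hyp update and re\hyp test work is charged to the incidence that caused it: the $\IE$\hyp updates and \ref{DP}\hyp re\hyp tests triggered by all vertex deletions sum to $O\bigl(m\sum_v|E(v)|\bigr)=O(m|H|)$, and the $\IV$\hyp updates and \ref{MD}\hyp re\hyp tests triggered by all edge deletions sum to $O\bigl(n\sum_e|e|\bigr)=O(n|H|)$; together with the setup this is $O(|H|\cdot(|V|+|E|))$. I would remark that this matches the intuition announced before \cref{thm:ser}: since edge deletions never reopen \ref{DP} and vertex deletions never reopen \ref{MD}, the only coupling is the alternation between the two rules, which is exactly what the $\M(\IG(H))$ matching bound in the proof of \cref{thm:parkern} measures.

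The step I expect to be the main obstacle is not the time bound but arguing \emph{exhaustiveness} in the presence of \ref{MD}'s non\hyp monotonicity: deleting a dominator can make \ref{MD} cease to apply to a vertex, so applicability is not preserved over time and a naive ``detect once'' strategy could miss a deletion. The fix I would use is to verify each candidate at the moment of deletion, and to exploit that, for a vertex~$v_j$ surviving to the end, its neighborhood $E(v_j)$ is fixed after the last deletion of a hyperedge incident to~$v_j$; past that point its dominator set can only shrink, so any set of dominators witnessing applicability of \ref{MD} at termination already witnessed it at that last re\hyp test, when $v_j$ would have been deleted---a contradiction. The analogous but easier (monotone) argument settles exhaustiveness for \ref{DP}.
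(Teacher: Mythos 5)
Your proposal is correct, and its algorithmic core coincides with the paper's \cref{alg:seq}: the same incremental construction of $\IE$ and $\IV$ by iterating over incidences rather than multiplying matrices, the same locality observation (a new supersedence requires deleting a vertex of the superseder; a new domination, or a drop of the demand threshold, requires deleting a hyperedge incident to the dominated vertex --- this is exactly what the lists $\PE$ and $\PV$ in \cref{theinv} encode), and the same charging of update and re-test work to incidences, giving $O(|H|\cdot(|V|+|E|))$. Where you genuinely diverge is the correctness argument. The paper organizes the computation as alternating batched passes (\cref{alg:seq-edges,alg:seq-vertices}) that replicate the tie-breaking parallel \cref{alg:par-edges,alg:par-vertices} exactly, so correctness and exhaustiveness are inherited from \cref{alg1:correct,alg2:correct} through \cref{alg4:correct,alg5:correct} and \cref{theinv}; your worklist with immediate, verified-at-deletion removals needs no tie-breakers, since every removal is a valid sequential application of \ref{DP} or \ref{MD} to the current hypergraph, and you prove exhaustiveness directly: supersedence is monotone under vertex deletion, and for \ref{MD} your ``last re-test'' argument (after the last deletion of a hyperedge incident to a surviving vertex $v_j$, its threshold $\max_{e\in E(v_j)}f(e)$ is frozen and its set of surviving dominators can only shrink) correctly disposes of the non-monotonicity you identify. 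Your route is more self-contained, as it does not lean on the parallel machinery; the paper's route buys the extra fact that the sequential and parallel algorithms output the same hypergraph, which keeps the serial and parallel results aligned with \cref{thm:parkern}. Two pieces of bookkeeping you elide --- recomputing $\max_{e\in E(v_j)}f(e)$ at a re-test costs an extra $O(|E(v_j)|)$, and the diagonal entries $\IE_{ii}$ and $\IV_{jj}$ must also be decremented on deletions --- stay comfortably within the claimed bound, so they are presentation gaps, not mathematical ones.
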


\noindent
To prove the theorem,
we first show
that \cref{alg:seq-edges,alg:seq-vertices} exhaustively apply \ref{DP} and \ref{MD}, respectively.
These algorithms are applied in a loop in \cref{alg:seq}.
We show that running time of \cref{alg:seq} satisfies the requirements of \cref{thm:ser},
which proves the theorem.

Each of the \cref{alg:seq-edges,alg:seq-vertices}
implements the exhaustive application of
one of the reduction rules \ref{DP} and \ref{MD}
in the same way as its parallel counterpart,
\cref{alg:par-edges} or \cref{alg:par-vertices}, respectively,
sequentalizing the parallel loops.
There are two important differences, however.
Firstly,
\cref{alg:seq-edges,alg:seq-vertices}
do not compute the matrices~$\IE$ and~$\IV$
from \eqref{mat-trick1} and \eqref{mat-trick2},
but get them as input
along with two Boolean arrays~$\RE$ and~$\RV$ such that
\begin{align}
  \RE_i=0&\iff e_i\in E \text{ is removed, and}
  \label{eq:removed-edges}\\
  \RV_i=0&\iff v_i\in V \text{ is removed.}
  \label{eq:removed-vertices}
\end{align}
It is the responsibility of \cref{alg:seq-edges,alg:seq-vertices}
to update $\IE$, $\IV$, $\RE$ and $\RV$ in place
after any hyperedge or vertex removal.

\looseness=-1
Secondly,
while the parallel algorithms
check all pairs of hyperedges
for supersedence and all pairs of vertices for dominance,
the sequential algorithms save time by narrowing their search.
When \cref{alg:seq-edges} removes a hyperedge~$e$,
it stores the indices of $e$'s vertices in a list~$\PV$.
Then,
\cref{alg:seq-vertices} searches dominators only for the vertices in~$\PV$.
Similarly,
\cref{alg:seq-vertices} stores indices of hyperedges from~$E(v)$
for any removed vertex $v$ in a list~$\PE$.
Then,
\cref{alg:seq-edges} only
searches for hyperedges that are superseded by those in~$\PE$.
More formally,
\cref{alg:seq-edges,alg:seq-vertices}
ensure the following invariant:
\begin{inv}
  \label[inv]{theinv}
  $\IE,\IV$ satisfy \eqref{mat-trick1} and \eqref{mat-trick2};
  $\RE,\RV$ satisfy \eqref{eq:removed-edges} and \eqref{eq:removed-vertices};
  $\PE$~contains the indices of a superset of hyperedges that supersede others in the sense of \ref{DP};
  $\PV$~contains the indices of a
  superset of vertices that can be removed by~\ref{MD}.
\end{inv}

\begin{algorithm}
    \SetKwInOut{Input}{Input}
    \SetKwInOut{Output}{Result}

    \Input{A demand function $f$; and $\IE,\IV,\RE,\RV,\PE,\PV$ satisfying \cref{theinv} for some hypergraph~$H$.}
    \Output{All inputs are updated in\hyp place
      so as to satisfy \cref{theinv}
      for the hypergraph obtained from~$H$ by exhaustive application of \ref{DP}.}

    $Q \gets$ empty list \tcp*{$Q$ stores indices of removed hyperedges}
    \For(\nllabel{alg:seq-edges:loop1}\tcp*[f]{iterate through all possibly superseded hyperedges}){$j\in\{1,\dots,m\}$ such that $\RE_j = 1$} {
        \For(\nllabel{alg:seq-edges:loop2}\tcp*[f]{iterate through all possibly superseding hyperedges}){$i\in \PE$ such that $i \neq j$ and $\RE_i = 1$} {
            $D_{ij} \gets$ \textbf{if} $f(e_i) - (\IE_{ii} - \IE_{ij}) \geq f(e_j)$ \textbf{then} 1 \textbf{else} 0\;
            $D_{ji} \gets$ \textbf{if} $f(e_j) - (\IE_{jj} - \IE_{ji}) \geq f(e_i)$ \textbf{then} 1 \textbf{else} 0\;
            \If(\nllabel{alg:seq-edges:apply-rule}\tcp*[f]{apply rule \ref{rule:pushing-demand} to $e_i$ and $e_j$}){$(D_{ij} = 1) \wedge ((D_{ji} = 0) \vee (i < j))$} {
                $Q\gets Q\cup\{j\}$\nllabel{alg:seq-edges:store-q}\;
                \Break
            }
        }
    }
    \For(\nllabel{alg:seq-edges:update}){$j\in Q$} {
        $\RE_j \gets 0$\nllabel{lin:req}\;
        \For(\nllabel{lin:ivupd}){$(v_i, v_k)\in e_j\times e_j$} {
            $\IV_{ik} \gets \IV_{ik} - 1$ \tcp*{update $I^V$ after $e_j$'s removal}
        }
        $\PV \gets \PV \cup \{i\in\{1,\dots,n\}\mid v_i\in e_j\}$\nllabel{alg:seq-edges:populate-pv} \tcp*{Removal of~$e_j$ may make \ref{MD}
          applicable to vertices in~$e_j$}
    }
    $\PE \gets$ empty list\nllabel{alg:seq-edges:clear-pe}\;

    \caption{Sequential algorithm for exhaustive application of \ref{DP}.}
    \label{alg:seq-edges}
\end{algorithm}

\begin{lemma}\label[lemma]{alg4:correct}
  \cref{alg:seq-edges} is correct.
\end{lemma}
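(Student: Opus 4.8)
The plan is to prove \cref{alg4:correct} by establishing the same three-part correctness statement used for \cref{alg:par-edges} in \cref{alg1:correct}, namely that (i)~every multiple hitting set for the input hypergraph~$H$ is one for the output hypergraph~$H'$, (ii)~every multiple hitting set for~$H'$ is one for~$H$, and (iii)~$H'$ is exhaustively reduced with respect to \ref{DP}; and additionally that the algorithm correctly maintains \cref{theinv}. Claims~(i) and~(ii) should again be essentially immediate: (i)~holds because the algorithm only deletes hyperedges and never changes demands, so~$H'$ is a sub-hypergraph of~$H$; (ii)~follows by the same supersedence argument as in \cref{alg1:correct}, since deletions are governed by the identical tie-breaking condition $(D_{ij}=1)\wedge((D_{ji}=0)\vee(i<j))$, which induces the same acyclic supersedence graph~$G$, and every deleted hyperedge has a surviving source-ancestor that supersedes it. The key new ingredient, relative to the parallel version, is that the inner loop in \cref{alg:seq-edges:loop2} ranges only over~$i\in\PE$ rather than over all hyperedges, so I would invoke \cref{theinv} to argue that~$\PE$ contains a superset of all currently superseding hyperedges, hence no deletion opportunity is missed by this narrowed search.

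First I would verify the invariant maintenance. The updates to~$\RE$ in \cref{lin:req} set $\RE_j\gets 0$ exactly for the removed hyperedges in~$Q$, preserving \eqref{eq:removed-edges}. The double loop in \cref{lin:ivupd} decrements $\IV_{ik}$ by one for every ordered pair $(v_i,v_k)\in e_j\times e_j$; since removing~$e_j$ decreases $|E(v_i)\cap E(v_k)|$ by exactly one whenever both~$v_i$ and~$v_k$ lie in~$e_j$ and leaves it unchanged otherwise, this correctly restores \eqref{mat-trick2} for~$\IV$. Note that~$\IE$ need not be updated here, because~$\IE_{ik}=|e_i\cap e_k|$ is unaffected by hyperedge deletion (no vertices are removed). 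Line~\ref{alg:seq-edges:populate-pv} adds to~$\PV$ the indices of all vertices in each removed~$e_j$, which is precisely the set of vertices that could newly satisfy the \ref{MD} condition after~$e_j$'s removal, so~$\PV$ indeed becomes a superset of the vertices removable by \ref{MD}. Finally, clearing~$\PE$ in \cref{alg:seq-edges:clear-pe} is safe because, once \ref{DP} has been exhaustively applied, no surviving hyperedge supersedes another.

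For exhaustiveness~(iii), the subtle point and the main obstacle is justifying the narrowed search together with the single-pass structure: the algorithm iterates each candidate~$j$ once and breaks out of the inner loop upon the first deletion, so I must argue that this still reaches a fixed point of \ref{DP}. The plan is to show that if some surviving hyperedge~$e_i$ supersedes a surviving hyperedge~$e_j$ at termination, then by \cref{theinv} the index~$i$ lies in~$\PE$ (as~$e_i$ is a superseding hyperedge), so the pair $(i,j)$ would have been examined in the loop and the tie-breaking condition would have triggered deletion of one of them, a contradiction. The delicate case is mutual supersedence, where I would lean on the same transitivity of supersedence \eqref{eq:trans} and the source-ancestor argument from \cref{alg1:correct} to ensure that the surviving hyperedges are exactly the sources of the acyclic graph~$G$ restricted to~$\PE$-induced arcs, and that the \textbf{break} does not cause a superseding relation among survivors to be overlooked. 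I expect the bulk of the writing to consist of carefully matching the sequential control flow against the correctness template already established for the parallel algorithm, so that the core combinatorial content can be cited from \cref{alg1:correct} rather than reproved.
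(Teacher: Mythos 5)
Your proposal is correct and follows essentially the same route as the paper's proof: reduce the sequential algorithm's correctness to that of \cref{alg:par-edges} (\cref{alg1:correct}) by using \cref{theinv} to argue that restricting the inner loop to~$\PE$ misses no superseding hyperedge, so the set of deletions coincides with the parallel algorithm's, and then verify invariant maintenance item by item (update of~$\RE$ and~$\IV$, no need to touch~$\IE$, populating~$\PV$ with the vertices of removed hyperedges, clearing~$\PE$). The only cosmetic differences are that you argue exhaustiveness directly on the terminal state rather than via exact simulation, and you justify skipping the $\IE$ update by noting intersection sizes are unchanged, where the paper instead observes that the stale $j$-th row and column are never accessed because $\RE_j=0$; both justifications are valid.
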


\begin{proof}
  \looseness=-1
    We prove the algorithm correctness
    by showing that it removes exactly those hyperedges
    that \cref{alg:par-edges} removes.
    In order to simulate the parallel run of \cref{alg:par-edges},
    the sequential algorithm uses nested loops
    in \cref{alg:seq-edges:loop1,alg:seq-edges:loop2}
    applying the rule \ref{DP} in \cref{alg:seq-edges:apply-rule}
    exactly as in \cref{alg:par-edges}.
    Since \cref{theinv} guarantees
    that $\PE$~contains the indices of
    all hyperedges that can supersede others,
    these loops iterate through all pairs of hyperedges to check for supersedence.
    Therefore,
    in \cref{alg:seq-edges:store-q},
    the list~$Q$ contains the indices of
    all hyperedges that would be removed by \cref{alg:par-edges}.

    We now show that,
    after execution of the algorithm,
    the data structures
    satisfy \cref{theinv} with respect to the hypergraph
    obtained by removing the hyperedges in~$Q$.
    The update happens in \cref{alg:seq-edges:update}.
    First,
    each removed hyperedge $e_j$ is marked as removed in~$\RE$ in \cref{lin:req}.
    Then,
    in \cref{lin:ivupd},
    the matrix $\IV$ is updated using the fact that,
    by removing~$e_j$,
    we delete it from any intersection $E(v_i)\cap E(v_k)$
    for each $\{v_i, v_k\}\subseteq e_j$.
    Although, the matrix~$\IE$ should also be updated in the $j$-th row and the $j$-th column,
    these row and column will never be accessed since $\RE_j=0$,
    so we do not do this update.
    To see why the update of~$\PV$
    in \cref{alg:seq-edges:populate-pv}
    is correct,
    consider a vertex $v\in V$
    that becomes removable by \ref{MD} after hyperedges from~$Q$ were removed.
    Then $|\Dom(v)|\geq\max_{e \in E(v)}f(e)$ is true
    after the removal, but not before it.
    Thus,
    $E(v)$ must contain at least one edge from~$Q$.
    Therefore,
    we populate~$\PV$ with all vertices contained in hyperedges from~$Q$.
    Finally, $\PE$ is cleared in \cref{alg:seq-edges:clear-pe},
    since no hyperedge can supersede another
    after \ref{DP} is applied exhaustively.
\end{proof}

\begin{algorithm}
    \SetKwInOut{Input}{Input}
    \SetKwInOut{Output}{Result}

    \Input{a demand function $f$; $\IE,\IV,\RE,\RV,\PE,\PV$
      satisfying \cref{theinv}.}
    \Output{All inputs are updated in\hyp place
      so as to satisfy \cref{theinv}
      for the hypergraph obtained from~$H$ by exhaustive application of \ref{MD}.}

    $Q \gets$ empty list \tcp*{$Q$ stores indices of removed vertices}
    \For(\nllabel{alg:seq-vertices:loop1}\tcp*[f]{iterate through all potentially removable vertices}){$j \in \PV$ such that $\RV_j = 1$} {
        $c \gets \max_{e\in E(v_j)}f(e)$ \tcp*{how many dominators does $v_j$ need to apply \ref{rule:multiple-domination}}
        \For(\nllabel{alg:seq-vertices:loop2}\tcp*[f]{iterate through all potential dominators of $v_j$}){$i\in\{1,\dots,n\}$ such that $i\neq j$ and $\RV_i = 1$} {
            \If(\tcp*[f]{check for domination, with tie breaker}){$(\IV_{ij} = \IV_{jj})\wedge((\IV_{ij}\neq\IV_{ii})\vee(i<j))$} {
                $c \gets c - 1$\nllabel{alg:seq-vertices:count-doms}\;
                \If(\nllabel{alg:seq-vertices:apply-rule}\tcp*[f]{remove $v_j$ via \ref{rule:multiple-domination}}){$c = 0$} {
                    $Q\gets Q\cup\{j\}$\nllabel{alg:seq-vertices:store-q}\;
                    \Break
                }
            }
        }
    }

    \For(\nllabel{alg:seq-vertices:update}){$j\in Q$} {
        $\RV_j \gets 0$\;
        \For{$(e_i, e_k) \in E(v_j)\times E(v_j)$} {
            $\IE_{ik} \gets \IE_{ik} - 1$ \tcp*{update $I^E$ after $v_j$'s removal}
        }
        $\PE \gets \PE \cup \{i\in\{1,\dots,m\}\mid e_i\in E(v_j) \}$\nllabel{alg:seq-vertices:populate-pe} \tcp*{removal of $v_j$ may make \ref{DP}
          applicable to hyperedges in~$E(v_j)$}
    }
    $\PV \gets$ empty list\nllabel{alg:seq-vertices:clear-pv}\;

    \caption{Sequential algorithm for exhaustive application of \ref{MD}.}
    \label{alg:seq-vertices}
\end{algorithm}

\begin{lemma}\label[lemma]{alg5:correct}
  \cref{alg:seq-vertices} is correct.
\end{lemma}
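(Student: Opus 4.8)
The plan is to follow the template of the proof of \cref{alg4:correct}: first show that \cref{alg:seq-vertices} marks for removal exactly the vertices that its parallel counterpart \cref{alg:par-vertices} removes, and then verify that the closing update loop restores \cref{theinv} for the hypergraph obtained by deleting those vertices. Since \cref{alg2:correct} already establishes that one application of \cref{alg:par-vertices} is an exhaustive application of \ref{MD}, the first step also yields that \cref{alg:seq-vertices} applies \ref{MD} exhaustively, as required by its output specification.

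For the removal-equivalence step, \cref{theinv} guarantees that $\PV$ contains the indices of every vertex that \ref{MD} can remove, so the outer loop in \cref{alg:seq-vertices:loop1} ranges over all relevant candidates $v_j$ (those with $\RV_j=1$). For a fixed $v_j$, the inner loop in \cref{alg:seq-vertices:loop2} visits all surviving $v_i$ and applies the same domination test with tie breaker, $(\IV_{ij}=\IV_{jj})\wedge((\IV_{ij}\ne\IV_{ii})\vee(i<j))$, that \cref{alg:par-vertices} uses in \cref{lin:dvij}; hence the two algorithms agree on which $v_i$ count as dominators of $v_j$. Initializing $c$ to $\max_{e\in E(v_j)}f(e)$ and decrementing it once per dominator, \cref{alg:seq-vertices} inserts $j$ into $Q$ exactly when the number of dominators reaches $\max_{e\in E(v_j)}f(e)$, which is precisely the deletion condition $C_j\geq\max_{e\in E(v_j)}f(e)$ realized in \cref{lin:rvj}; the early exit from the loop is a harmless optimization, as it merely stops once enough dominators have been counted. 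Because $\IV$ is not modified before the closing loop, both algorithms evaluate their tests on the same snapshot of $\IV$, so $Q$ ends up holding exactly the set of vertices deleted by \cref{alg:par-vertices}.

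For the invariant-restoration step, I would check each in-place update performed in \cref{alg:seq-vertices:update}. Setting $\RV_j\gets 0$ restores \eqref{eq:removed-vertices}. To restore \eqref{mat-trick1}, observe that deleting $v_j$ removes it from every intersection $e_i\cap e_k$ of two hyperedges both containing $v_j$, so decrementing $\IE_{ik}$ for each $(e_i,e_k)\in E(v_j)\times E(v_j)$ is correct; symmetrically to the handling of $\IE$ in \cref{alg:seq-edges}, the matrix $\IV$ needs no update, since deleting a vertex leaves $E(v_i)$ unchanged for every surviving $v_i$, and row and column $j$ are never read again once $\RV_j=0$. For the population of $\PE$ in \cref{alg:seq-vertices:populate-pe}, I would argue, in analogy with the $\PV$ argument of \cref{alg4:correct}, that a hyperedge $e_i$ can newly come to supersede some $e_j$ only if $|e_i\setminus e_j|$ has decreased, which requires a deleted vertex lying in $e_i\setminus e_j\subseteq e_i$, that is, $e_i\in E(v)$ for some deleted vertex~$v$; hence adding all of $E(v_j)$ for each $j\in Q$ yields a valid superset. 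Finally, clearing $\PV$ in \cref{alg:seq-vertices:clear-pv} is justified because no surviving vertex is removable by \ref{MD} after its exhaustive application.

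Most of these steps are mechanical consequences of the symmetry with \cref{alg4:correct}. The main thing to get right will be the $\PE$-population argument, namely establishing that every hyperedge which could \emph{become} superseding as a result of these vertex deletions indeed lies in $E(v)$ for some deleted $v$, together with confirming that the counter-and-exit mechanism faithfully reproduces the parallel deletion condition even though it need not enumerate all dominators of a removed vertex.
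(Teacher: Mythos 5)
Your proposal is correct and follows essentially the same two-step structure as the paper's own proof: first arguing that the loops, domination test, and counter reproduce exactly the deletions of \cref{alg:par-vertices} (using that $\IV$ and $\RV$ are untouched until the closing loop), and then restoring \cref{theinv} with the same arguments for the $\IE$ update, the omitted $\IV$ update, the $\PE$ population via a newly deleted vertex in $e_i$, and the clearing of $\PV$. No meaningful differences to report.
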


\begin{proof}
  We prove the lemma by showing
  that the algorithm removes exactly the same vertices as
  \cref{alg:par-vertices}.
  To this end,
  the algorithm counts
  the dominators for a vertex in \cref{alg:seq-vertices:count-doms} and
  applies the rule \ref{MD} in \cref{alg:seq-vertices:apply-rule}
  exactly as in \cref{alg:par-vertices}.
  Since \cref{theinv} guarantees
  that $\PV$~contains the indices of all vertices
  that can be removed by \ref{MD},
  the loops iterate through all removable vertices
  and for each of them --- through all dominators.
  Therefore, all vertices that would be removed by \cref{alg:par-vertices}
  are stored in the list~$Q$ due to \cref{alg:seq-vertices:store-q}.

  \looseness=-1
  We now show that after the algorithm,
  all data structures satisfy \cref{theinv}
  with respect to the hypergraph obtained by removing the vertices in~$Q$.
  The data structures are updated in the loop in
  \cref{alg:seq-vertices:update}.
  First,
  for each $j\in Q$,
  vertex $v_j$ is marked as removed in~$\RV$.
  The matrix $\IE$ is updated using the fact that,
  by removing~$v_j$,
  we extract it from any intersection~$e_i\cap e_k$
  for each $\{e_i, e_k\}\subseteq E(v_k)$.
  Although the matrix~$\IV$
  should also be updated in the $j$-th row and the $j$-th column,
  this matrix entry will never be accessed since $\RV_j=0$,
  so we do not update it.
  To see why $\PE$~is updated correctly,
  consider two hyperedges $e_i,e_j\in E$,
  such that $e_i$ supersedes $e_j$ after removal of vertices in~$Q$
  but not before the removal.
  Then, $f(e_i)-|e_i\setminus e_j|\geq f(e_j)$ holds after removal,
  but not before removal.
  Thus, $|e_i\setminus e_j|$ has decreased,
  and since hyperedge sizes cannot increase,
  at least one vertex in~$e_i$ must have been removed.
  The index of this vertex is contained in~$Q$.
  Thus,
  we populate~$\PE$ with all hyperedges containing any vertex whose index is in~$Q$ in \cref{alg:seq-vertices:populate-pe}.
  Finally, $\PV$~is cleared in \cref{alg:seq-vertices:clear-pv},
  since no vertex can be removed by \ref{MD} after it is applied exhaustively.
\end{proof}

\begin{algorithm}
    \SetKwInOut{Input}{Input}
    \SetKwInOut{Output}{Output}
    \SetKwRepeat{Do}{do}{while}

    \Input{A hypergraph $H = (V, E)$ and a demand function $f$.}
    \Output{Hypergraph $H' = (V', E')$ obtained from~$H$
      by exhaustive application of \ref{DP} and \ref{MD}.}

    $\IE \gets(m\times m)$-matrix of zeroes\;
    \For(\nllabel{alg:seq:build-ie}){$k\in\{1,\dots,n\}$ and $(e_i, e_j)\in E(v_k)\times E(v_k)$} {
        $\IE_{ij} \gets \IE_{ij} + 1$ \tcp*{$\IE_{ij}=|e_i\cap e_j|$, as in \eqref{mat-trick1}}
    }
    $\IV \gets (n\times n)$-matrix of zeros\;
    \For(\nllabel{alg:seq:build-iv}){$k\in\{1,\dots,m\}$
      and $(v_i,v_j)\in e_k\times e_k$} {
        $\IV_{ij} \gets \IV_{ij} + 1$ \tcp*{$\IV_{ij}=|E(v_i)\cap E(v_j)|$, as in \eqref{mat-trick2}}
    }
    $\RE \gets$ length-$m$ array of ones \tcp*{$\RE_j=0\iff e_j$ is removed}
    $\RV \gets$ length-$m$ array of ones \tcp*{$\RV_j=0\iff v_j$ is removed}
    $\PE \gets [1,\dots,m]$\nllabel{alg:seq:init-pe}\tcp*{list of indices of potentially superseding hyperedges}
    $\PV \gets [1,\dots,n]$\nllabel{alg:seq:init-pv}\tcp*{list of indices of potentially removable vertices}
    \Do(\tcp*[f]{At this point, \cref{theinv} holds for the input hypergraph~$H$}\nllabel{lin:inv}){$\RE$ or $\RV$ changes} {
        Apply \cref{alg:seq-edges} to $(f, \IE, \IV, \RE, \RV, \PE, \PV)$\;
        Apply \cref{alg:seq-vertices} to $(f, \IE, \IV, \RE, \RV, \PE, \PV)$
    }
    $E' \gets \{e_j \in E\mid \RE_j = 1\}$\nllabel{lin:builde'}\;
    $V' \gets \{v_j \in V\mid \RV_j = 1\}$\nllabel{lin:buildv'}\;
    \Return $H'=(V', E')$
    \caption{Sequential algorithm for exhaustive application of \ref{DP} and \ref{MD}}
    \label{alg:seq}
\end{algorithm}

\begin{proof}[Proof of \cref{thm:ser}]
  \cref{alg:seq} applies \ref{DP} and \ref{MD} as long as possible.
  We first prove correctness of the algorithm.
  To this end,
  we first show that \cref{theinv} holds with respect to
  the input hypergraph~$H$ at the start of the loop in \cref{lin:inv}.
  The matrix~$\IE$ is computed in \cref{alg:seq:build-ie}
  using the following observation:
  each vertex $v\in V$ is counted exactly once
  in each intersection of hyperedges in $E(v)$.
  It thus satisfies \cref{theinv}.

  \looseness=-1
  Similarly, each hyperedge~$e\in E$ is counted exactly once
  in each intersection of vertices in~$e$,
  thus
  $\IV$~satisfies
  \cref{theinv}
  in \cref{alg:seq:build-iv}.
  In \cref{alg:seq:init-pe,alg:seq:init-pv}
  we put the indices of all hyperedges into~$\PE$
  and the indices of all vertices into~$\PV$,
  so that \cref{theinv} is trivially satisfied.
  \cref{alg4:correct,alg5:correct} ensure
  that \cref{theinv} holds after each iteration
  of the loop,
  in particular after the last iteration,
  after which none of the two algorithms
  reduces any more vertices or hyperedges.

  Thus,
  according to \cref{theinv},
  after the loop in \cref{lin:inv},
  in \cref{lin:builde',lin:buildv'},
  from $\RE$ and $\RV$ we extract
  the hyperedges and vertices of a hypergraph~$H'$
  that was obtained from~$H$ by exhaustive application
  of both \ref{DP} and \ref{MD}.
  
  We now analyze the running time of \cref{alg:seq}.
  Before \cref{lin:inv},
  the algorithm spends $O(|H|\cdot (|V|+|E|))$ time
  for establishing \cref{theinv},
  in particular for filling the matrices~$\IE$ and~$\IV$.
  Indeed, to fill matrix $\IE$
  the algorithm spends $O(|E(v_k)|^2)$ time for each vertex $v_k\in V$,
  which sums up to
  $O(\sum_{v_k\in V}|E(v_k)|^2)\in O(|E|\sum_{v_k\in V}|E(v_k)|)\in O(|E|\cdot|H|)$.
  Similarly, for filling $\IV$ the running time is
  $O(\sum_{e_k\in E}|e_k|^2)\in O(|V|\sum_{e_k\in E}|e_k|)\in O(|V|\cdot|H|)$.
  Next,
  we analyze the accumulated running time
  of \cref{alg:seq-edges,alg:seq-vertices}
  over all iterations.

  In \cref{alg:seq-edges}, the two nested loops
  in \cref{alg:seq-edges:loop1,alg:seq-edges:loop2}
  are executed in $O(|E|\cdot|\PE|)$ time.
  Any hyperedge $e_i$
  is placed into~$\PE$ once during the initialization
  of \cref{alg:seq}
  and when   \cref{alg:seq-vertices} removes one of its vertices,
  which happens at most once per vertex in~$e_i$.
  Thus, the accumulated running time of \cref{alg:seq-edges} is
  $O(|E|\sum_{e\in E}(1+|e_i|))\subseteq O(|E|\cdot|H|)$.
  The next loop in \cref{alg:seq-edges}
  is in \cref{alg:seq-edges:update}.
  It runs in time $O(|Q|+\sum_{i\in Q}|e_i|^2)$.
  Since each hyperedge is deleted (that is, in~$Q$) at most once,
  this is $O(|E|+\sum_{i\in Q}|e_i|^2)\subseteq O(|E|+|V|\sum_{i\in Q}|e_i|)\subseteq O(|E|\cdot|H|)$.
  
    It remains to analyze the time spent in \cref{alg:seq-vertices}
    during all iterations.
    \cref{alg:seq-vertices} has two nested loops
    in \cref{alg:seq-vertices:loop1,alg:seq-vertices:loop2},
    which run in $O(|V|\cdot|\PV|)$~time.
    A vertex~$v_i$ is placed in $\PV$
    once during the initialization of \cref{alg:seq}
    and when \cref{alg:seq-edges} removes a hyperedge 
    containing $v_i$.
    Thus, the amortized running time of \cref{alg:seq-vertices}
    is $O(|V|\sum_{v\in V}(1+|E(v_i)|))\subseteq O(|V|\cdot|H|)$.
    The loop in \cref{alg:seq-vertices:update}
    runs in $O(|Q|+\sum_{i\in Q}|E(v_i)|^2)$ time.
    Over all executions of \cref{alg:seq-vertices},
    this sums up to $O(|V|+\sum_{i\in Q}|E(v_i)^2|)\subseteq O(|V|+|E|\sum_{i\in Q}|E(v_i)|)\subseteq O(|V|\cdot|H|)$,
    since no vertex is removed twice.
    The overall running time of \cref{alg:seq} is therefore $O(|H|\cdot(|V|+|E|))$.
\end{proof}

\section{Experiments}
\label{sec:exp}
\noindent
In \cref{sec:dr},
we presented several data reduction rules for
Multiple Hitting Set.
Of these,
\cref{theorem:big-kernel} shows
that an exhaustive application of \ref{SE}
followed by an exhaustive application of \ref{MD}
is enough to yield a problem kernel.
It is easy to come up with examples
where the additional data reduction rules
have no additional data reduction effect,
for example,
in the case of unit demands.
Also,
we provided efficient implementations only of \ref{DP} (which supersedes \ref{SE}) and \ref{MD},
whereas application of the other data reduction rules,
namely \ref{FE} and \ref{LP},
is more time\hyp consuming.

Thus,
in this section,
we experimentally evaluate the effect of various combinations of the
data reduction rules presented in \cref{sec:dr}
and also analyze which combinations reach
real speed\hyp ups compared to the data reduction algorithms
built into general optimization tools such as CBC.

First,
in \cref{sec:exp-setup},
we describe our experimental setup.
Then,
in \cref{sec:rw},
we describe experiments on real-world data arising
in cancer drug design.
Finally,
in \cref{sec:generated},
we describe experiments on random hypergraphs modeling
transportation network optimization tasks.

\subsection{Experimental setup}
\label{sec:exp-setup}
\noindent
\looseness=-1
Experiments were run on an
AMD Ryzen 9 5950X 16-Core CPU,
GeForce RTX 3090 GPU,
on Ubuntu 18.04.6.
We implemented our algorithms in C++ and compiled the code
with GCC~9.2.1.
The parallel algorithm was implemented on the GPU using
OpenCL~2.2.\footnote{\url{https://www.khronos.org/opencl/}}
All implemented algorithms are wrapped into a Python package.\footnote{The source code
is freely available at \url{https://gitlab.com/PavelSmirnov/hs-dilworth-kernel}.}

\paragraph{ILP solving}
We measure not only the data reduction effect,
but also the effect that the data reduction has on the total
time of solving Multiple Hitting Set instances.
To this end,
after data reduction,
we solve Multiple Hitting Set instances
using the CBC ILP solver\footnote{\url{https://github.com/coin-or/Cbc}}
with the following ILP model:
\begin{align}
  \min\sum_{v\in V}x_v&\text{\quad s.\,t.}\label{LPm}\\
\notag{}    \sum\limits_{v\in e}x_v\geq f(e)&\text{\quad for all }e\in E\\
  \notag{} x_v\in\{0,1\}&\text{\quad for all }v\in V
\end{align}
Here, $x_v$ indicates whether $v$ is taken into the hitting set or not.

\paragraph{Implementation details}

The data reduction rule \ref{FE} is implemented
in a straightforward way.
To implement \ref{LP},
we compute the required lower bounds~$L_i$
for each edge~$e_i$ using CBC on an
LP relaxation of \eqref{LPm}.
The data reduction rules \ref{DP} and \ref{MD}
have been implemented in the following three variants.
\begin{itemize}[\algoGPUFE:]
\item[\algoGPU{}] is an implementation
  of the parallel algorithm \cref{alg:par} on the GPU
  using OpenCL.
  
\item[\algoCPU{}] is an implementation
  of the sequential \cref{alg:seq} on the CPU.
  
\item[\algoGPUFE{}] alternatingly applies \algoGPU{}
  and \ref{FE} until exhaustion.
\end{itemize}
The data reduction result of \algoGPU{} and \algoCPU{} is the same,
only their running times may differ. 
In the OpenCL implementation of \algoGPU{},
as in the description of 
\cref{alg:par-edges,alg:par-vertices},
each pair of hyperedges (or vertices)
is tested for supersedence (or dominance)
independently.
However,
supersedence (or dominance)
is checked by iterating over possibly common
vertices (or hyperedges) sequentially.
This does not pose a problem,
since the GPU does not have enough processors for full
parallelization anyway.

Before applying any of \cref{alg:par-edges,alg:par-vertices} in \cref{alg:par},
the incidence matrix of the hypergraph is updated
on the CPU,
so as to keep only vertices and hyperedges that are not yet deleted.
The $(m\times n)$-incidence
matrix is stored in a space\hyp efficient manner:
if $n\gg m$,
then
the column for each vertex consists of
$\lceil m/32\rceil$ integers
of 32 bit,
so that the bits in each integer indicate
the incidence relation of the vertex with 32 consecutive hyperedges.
This allows us not only to allocate the optimal
amount of memory for the incidence matrix,
but also apply efficient bitwise operations provided by the GPU.
This speeds up the linear run through the hyperedges
for each vertex pair,
which is the more time\hyp consuming step when $n\gg m$.
If $m\gg n$,
we would want to condense rows instead of columns.

\begin{figure}[p]
  \centering
  \ref{rulelegend}
  
    \begin{tikzpicture}[scale=1]
        \begin{axis}[xlabel={$\alpha$ (demand)},
                     ylabel={\% of remaining hyperedges (of 152)},
                     xlabel shift=-1pt,
                     ylabel shift=-1pt,
                     grid=both,
                     grid style={line width=.1pt, draw=gray!10},
                     major grid style={line width=.2pt,draw=gray!50},
                     minor tick num=5,
                     xmin=0,
                     xmax=60,
                     ymin=0,
                     ymax=80,
                     legend columns=-1,
                     legend style={/tikz/every even column/.append style={column sep=0.5cm}},
                     legend to name={rulelegend}]
            \addplot[thick,densely dotted] table[col sep=tab,x=demand,y expr=\thisrow{num_edges_worse}/\thisrow{num_edges_identity}*100]{cancer-gpu-presolver.csv};\addlegendentry[scale=0.75]{\ref{MD}+\ref{SE}}
            \addplot[dashed] table[col sep=tab,x=demand,y expr=\thisrow{num_edges_opencl}/\thisrow{num_edges_identity}*100]{cancer-gpu-presolver.csv};\addlegendentry[scale=0.75]{\ref{MD}+\ref{DP}}
            \addplot[dashdotted] table[col sep=tab,x=demand,y expr=\thisrow{num_edges_opencl_fe}/\thisrow{num_edges_identity}*100]{cancer-gpu-presolver.csv};\addlegendentry[scale=0.75]{\ref{MD}+\ref{DP}+\ref{FE}}
            \addplot[solid] table[col sep=tab,x=demand,y expr=\thisrow{num_edges_lb_fe}/\thisrow{num_edges_identity}*100]{cancer-gpu-presolver.csv};\addlegendentry[scale=0.75]{\ref{MD}+\ref{rul:lp}+\ref{FE}}
        \end{axis}
    \end{tikzpicture}\hfill
    \begin{tikzpicture}[scale=1]
        \begin{axis}[xlabel={$\alpha$ (demand)},
                     ylabel={\% of remaining vertices (of 7124)},
                     xlabel shift=-1pt,
                     ylabel shift=-1pt,
                     grid=both,
                     grid style={line width=.1pt, draw=gray!10},
                     major grid style={line width=.2pt,draw=gray!50},
                     minor tick num=5,
                     xmin=0,
                     xmax=60,
                     ymin=0,
                     ymax=15,
                     ylabel near ticks, yticklabel pos=right]
             \addplot[thick,densely dotted] table[col sep=tab,x=demand,y expr=\thisrow{num_vertices_worse}/\thisrow{num_vertices_identity}*100]{cancer-gpu-presolver.csv};
                                 \addplot[dashed] table[col sep=tab,x=demand,y expr=\thisrow{num_vertices_opencl}/\thisrow{num_vertices_identity}*100]{cancer-gpu-presolver.csv};
                                 \addplot[dashdotted] table[col sep=tab,x=demand,y expr=\thisrow{num_vertices_opencl_fe}/\thisrow{num_vertices_identity}*100]{cancer-gpu-presolver.csv};
                                 \addplot[solid] table[col sep=tab,x=demand,y expr=\thisrow{num_vertices_lb_fe}/\thisrow{num_vertices_identity}*100]{cancer-gpu-presolver.csv};
        \end{axis}
    \end{tikzpicture}
    \caption{Data reduction effect of various data reduction rules
      from \cref{sec:dr} on the real-world data set.
      The data reduction rules specified in the legend are applied exhaustively.
      On the right, the graph for \ref{MD}+\ref{DP} coincides with the graph for \ref{MD}+\ref{SE}.}
    \label{fig:cancer-reduction}
\end{figure}
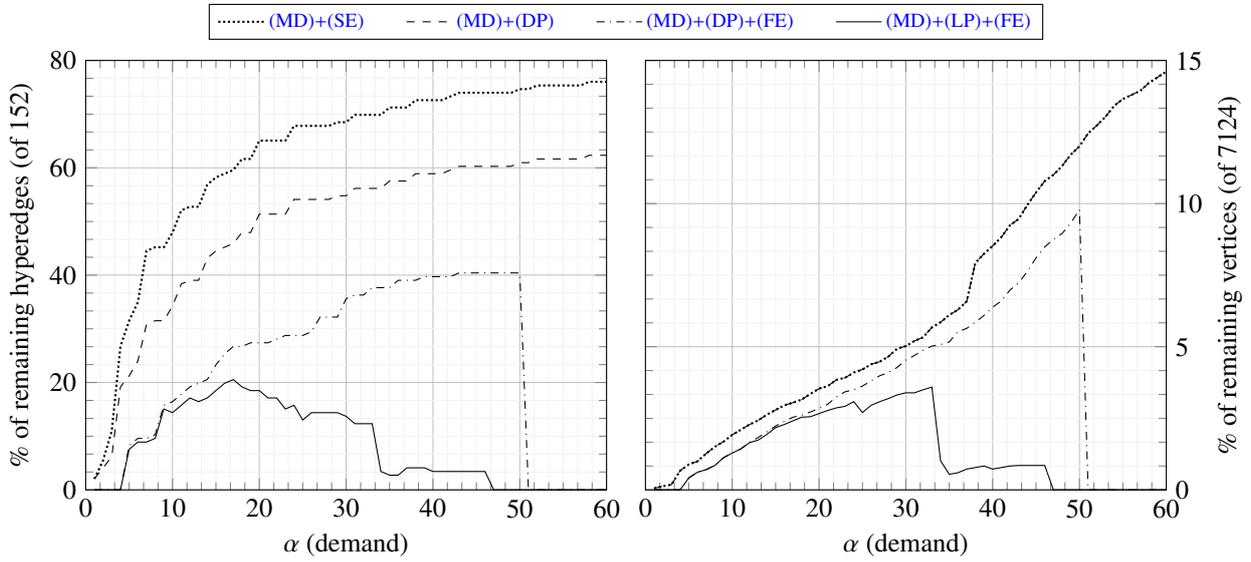

\begin{figure}[p]
  \centering
  \ref{algolegend}
  
    \begin{tikzpicture}[scale=0.95]
        \begin{axis}[xlabel={$\alpha$ (demand)},
                     ylabel={total time for reduction and CBC [s]},
                     xlabel shift=-1pt,
                     ylabel shift=-1pt,
                     grid=both,
                     grid style={line width=.1pt, draw=gray!10},
                     major grid style={line width=.2pt,draw=gray!50},
                     minor tick num=5,
                     xmin=0,
                     xmax=60,
                     ymin=0,
                     ymax=1,
                     log ticks with fixed point,
                     ymode=log,
                     scaled y ticks=false,
                     legend columns=-1,
                     legend style={/tikz/every even column/.append style={column sep=0.5cm}},
                     legend to name={algolegend}]
            \addplot[solid] table[col sep=tab,x=demand,y=identity_solve_time]{cancer-gpu-presolver.csv};\addlegendentry[scale=0.75]{no reduction}
            \addplot[thick,densely dotted] table[col sep=tab,x=demand,y=cpu_total_time]{cancer-gpu-presolver.csv};\addlegendentry[scale=0.75]{\algoCPU}
            \addplot[dashed] table[col sep=tab,x=demand,y=opencl_total_time]{cancer-gpu-presolver.csv};\addlegendentry[scale=0.75]{\algoGPU}
            \addplot[dashdotted] table[col sep=tab,x=demand,y=opencl_fe_total_time]{cancer-gpu-presolver.csv};\addlegendentry[scale=0.75]{\algoGPUFE}
        \end{axis}
    \end{tikzpicture}\hfill
    \begin{tikzpicture}[scale=0.95]
        \begin{axis}[xlabel={$\alpha$ (demand)},
                     ylabel={speedup factor},
                     xlabel shift=-1pt,
                     ylabel shift=-1pt,
                     grid=both,
                     grid style={line width=.1pt, draw=gray!10},
                     major grid style={line width=.2pt,draw=gray!50},
                     minor tick num=5,
                     xmin=0,
                     xmax=60,
                     ymin=0.8,
                     ymax=100,
                     log ticks with fixed point,
                     ymode=log,
                     ylabel near ticks, yticklabel pos=right]
                     \addplot[solid] coordinates{(0,1)(100,1)};
                     \addplot[thick,densely dotted] table[col sep=tab,x=demand,y=speed_up_factor_cpu]{cancer-gpu-presolver.csv};
                     \addplot[dashed] table[col sep=tab,x=demand,y=speed_up_factor_opencl]{cancer-gpu-presolver.csv};
                     \addplot[dashdotted] table[col sep=tab,x=demand,y=speed_up_factor_opencl_fe]{cancer-gpu-presolver.csv};
        \end{axis}
    \end{tikzpicture}
    \caption{Total running time for applying our data reduction algorithms to our Multiple Hitting Set instances from the real-world data set and solving them with CBC afterwards.}
    \label{fig:cancer-speedup}
\end{figure}
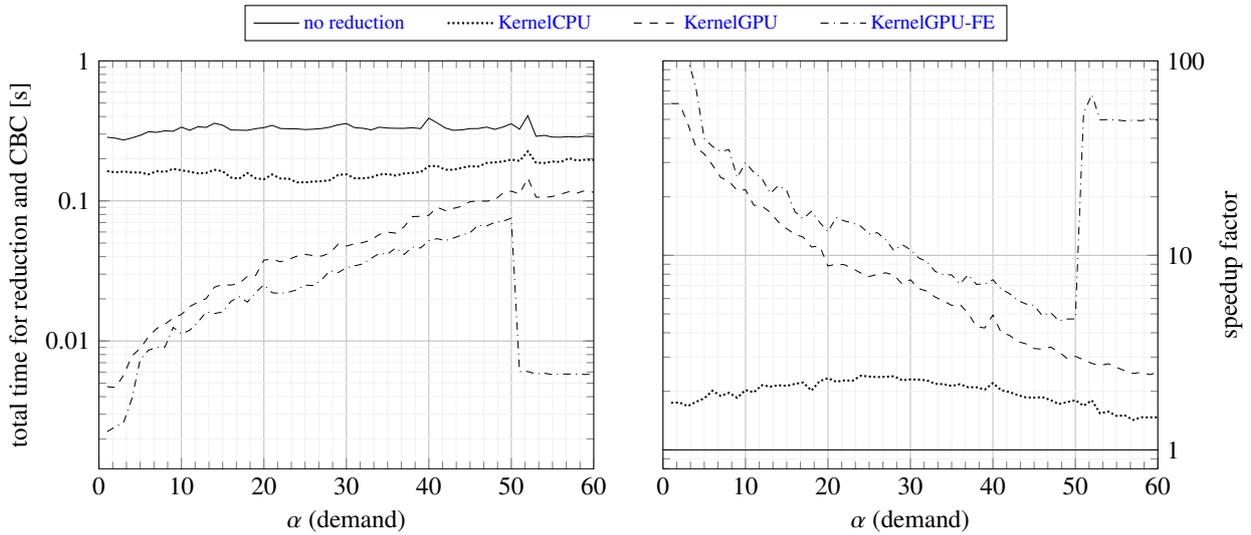

\subsection{Real-world data set}
\label{sec:rw}
\noindent
In this section,
we present experimental results on a
more recent version of the data set
used by \citet{Vaz09} and \citet{MPMM10}:
in these hypergraphs,
each hyperedge corresponds to a line of cancer cells,
each vertex corresponds to a chemical compound.
A vertex is contained in a hyperedge
if the corresponding chemical compound is observed
to inherit the growth of the corresponding cancer cell line.
Then,
Hitting Set solves the problem of selecting a minimum set
of compounds that inherit the growth of all cancer cell lines
in the data set.
\citet{MPMM10} motivate the use of \emph{Multiple} Hitting Set
by noise in the data:
to be on the safe side,
each cell line is hit not only once, but several times.

In more detail,
we downloaded the 2020 version
of the NCI60 human anti\hyp cancer drug screen set.\footnote{\url{http://dtp.nci.nih.gov/}}
This data set contains response data of
over 40,000 drugs against several cell lines.
In the same way as \citet{Vaz09} and \citet{MPMM10},
we add a chemical compound to a hyperedge,
corresponding to a cell line,
whenever its effect on the cell line
is stronger than the mean by more than two standard deviations.
As a result,
we obtain a hypergraph with 152 hyperedges
and 7124 vertices.

The hyperedge demands are governed by a parameter~$\alpha$
and are $f(e)=\min\{\alpha,|e|\}$ for each~$e\in E$.
We run experiments for all $\alpha\in\{1,\dots,60\}$,
since for larger $\alpha$ the instances
turned out to be trivial.

We chose this data set
since the resulting Multiple Hitting Set instances
are easily solved by CBC
and
we can thus see the trade\hyp offs between
power and speed of our data reduction rules.

\paragraph{Results}
\noindent
\cref{fig:cancer-reduction} shows the data reduction
effect of various combinations of the
data reduction rules in \cref{sec:dr}.
We see that \ref{DP} has a significantly stronger data reduction
effect than its weaker version \ref{SE}.
We also see that \ref{FE} has a strong additional data reduction
effect.
This is because it causes many cascading effects:
it can make
\ref{DP} and \ref{MD} applicable again.
In particular,
the combinations with \ref{FE} solve the problem
instance optimally for $\alpha\geq 51$ and for $\alpha\in\{1,2,3,4\}$,
whereas without \ref{FE}
the instances are solved optimally only for $\alpha\in\{1,2,3\}$.

\cref{fig:cancer-speedup}
shows how our data reduction algorithms
influence the speed for solving our Multiple Hitting Set instances
using CBC.
We see that the parallel implementations of \cref{alg:par}
on GPUs are able to speed up CBC by a factor of more than 10.
The speedup caused by \algoGPUFE{} is comparable to
that caused by \algoGPU{},
although it makes two to four (most commonly three) data reduction iterations.
For $\alpha\geq51$ the speed-up is exceptionally high
because \algoGPUFE{} solves the instance completely.
The sequential algorithm \algoCPU{}
speeds up CBC only by an observed factor of two.
The data reduction rule \ref{LP}
slows CBC down
on the considered data set. %
Thus,
in practise,
we can recommend \ref{LP} only for
hard Multiple Hitting Set instances,
since solving an LP relaxation
for each hyperedge in the input hypergraph
is rather expensive (but takes polynomial time).

The fact that \algoGPU{} speeds up CBC by an order of magnitude,
whereas \algoCPU{} gives yields speed-ups of only a factor of two,
shows that, in practice, parallelization in kernelization
can indeed make a significant difference.
In cases where the GPU
has too little memory to execute \algoGPU{},
it may still make sense to fall back to \algoCPU{}.

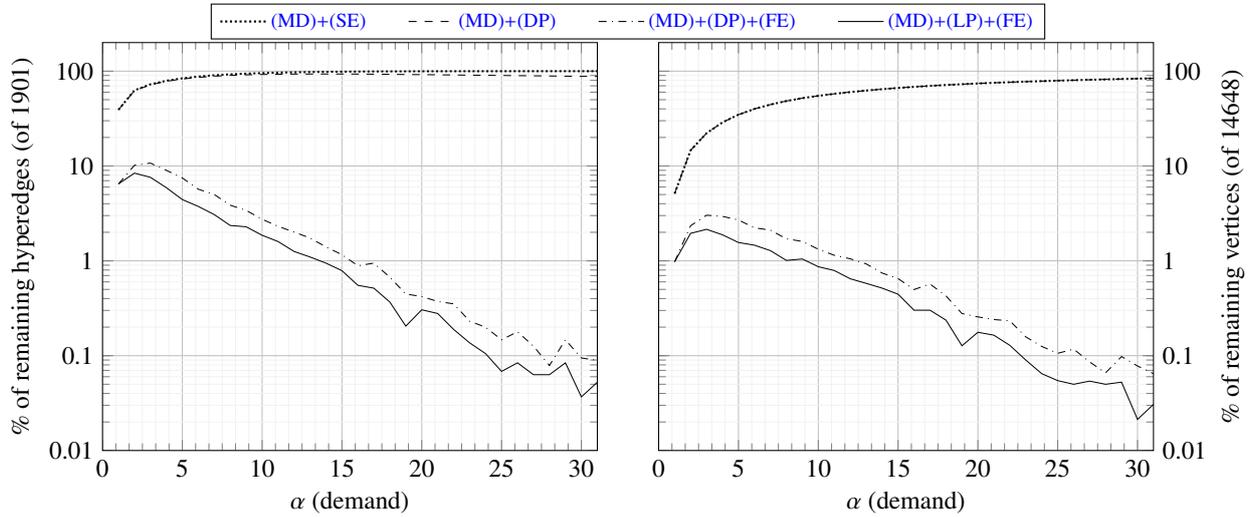
\begin{figure}[p]
  \centering
  \ref{rulelegend}
  
    \begin{tikzpicture}[scale=0.95]
        \begin{axis}[xlabel={$\alpha$ (demand)},
                     ylabel={\% of remaining hyperedges (of 1901)},
                     xlabel shift=-1pt,
                     ylabel shift=-1pt,
                     grid=both,
                     grid style={line width=.1pt, draw=gray!10},
                     major grid style={line width=.2pt,draw=gray!50},
                     minor tick num=5,
                     xmin=0,
                     xmax=31,
                     ymin=0.01,
                     ymax=200,
                     ymode=log,
                     log ticks with fixed point,
                     legend style={{at=(1, 0.8)},
                                   anchor=north east}]
            \addplot[thick,densely dotted] table[col sep=tab,x=demand,y expr=\thisrow{num_edges_worse}/\thisrow{num_edges_identity}*100]{big-girg.csv};%
            \addplot[dashed] table[col sep=tab,x=demand,y expr=\thisrow{num_edges_opencl}/\thisrow{num_edges_identity}*100]{big-girg.csv};%
            \addplot[dashdotted] table[col sep=tab,x=demand,y expr=\thisrow{num_edges_opencl_fe}/\thisrow{num_edges_identity}*100]{big-girg.csv};%
            \addplot[solid] table[col sep=tab,x=demand,y expr=\thisrow{num_edges_lb_fe}/\thisrow{num_edges_identity}*100]{big-girg.csv};%
        \end{axis}
    \end{tikzpicture}\hfill
    \begin{tikzpicture}[scale=0.95]
        \begin{axis}[xlabel={$\alpha$ (demand)},
                     ylabel={\% of remaining vertices (of 14648)},
                     xlabel shift=-1pt,
                     ylabel shift=-1pt,
                     grid=both,
                     grid style={line width=.1pt, draw=gray!10},
                     major grid style={line width=.2pt,draw=gray!50},
                     minor tick num=5,
                     xmin=0,
                     xmax=31,
                     ymin=0.01,
                     ymax=200,
                     ymode=log,
                     log ticks with fixed point,
                     ylabel near ticks, yticklabel pos=right,
                     legend style={{at=(1, 0.8)},
                                   anchor=north east}]
            \addplot[thick,densely dotted] table[col sep=tab,x=demand,y expr=\thisrow{num_vertices_worse}/\thisrow{num_vertices_identity}*100]{big-girg.csv};%
            \addplot[dashed] table[col sep=tab,x=demand,y expr=\thisrow{num_vertices_opencl}/\thisrow{num_vertices_identity}*100]{big-girg.csv};%
            \addplot[dashdotted] table[col sep=tab,x=demand,y expr=\thisrow{num_vertices_opencl_fe}/\thisrow{num_vertices_identity}*100]{big-girg.csv};%
            \addplot[solid] table[col sep=tab,x=demand,y expr=\thisrow{num_vertices_lb_fe}/\thisrow{num_vertices_identity}*100]{big-girg.csv};%
        \end{axis}
    \end{tikzpicture}
    \caption{Data reduction effect of various data reduction rules
      from \cref{sec:dr} on the generated data set.
      The data reduction rules specified in the legend are applied exhaustively;
      the graphs for \ref{MD}+\ref{SE} and \ref{MD}+\ref{DP} are nearly identical.
    }
    \label{fig:girg-reduction}
\end{figure}

\begin{figure}[p]
  \centering
  \ref{algolegend}
  
    \begin{tikzpicture}[scale=0.95]
        \begin{axis}[xlabel={$\alpha$ (demand)},
                     ylabel={total time for reduction and CBC [s]},
                     xlabel shift=-1pt,
                     ylabel shift=-1pt,
                     grid=both,
                     grid style={line width=.1pt, draw=gray!10},
                     major grid style={line width=.2pt,draw=gray!50},
                     minor tick num=5,
                     xmin=0,
                     xmax=31,
                     ymin=2,
                     ymax=10.1,
                     scaled y ticks=false]
                     \addplot[solid] table[col sep=tab,x=demand,y=identity_solve_time]{big-girg.csv};
                     \addplot[thick,densely dotted] table[col sep=tab,x=demand,y=cpu_total_time]{big-girg.csv};
                     \addplot[dashed] table[col sep=tab,x=demand,y=opencl_total_time]{big-girg.csv};
                     \addplot[dashdotted] table[col sep=tab,x=demand,y=opencl_fe_total_time]{big-girg.csv};
        \end{axis}
    \end{tikzpicture}\hfill
    \begin{tikzpicture}[scale=0.95]
        \begin{axis}[xlabel={$\alpha$ (demand)},
                     ylabel={speedup factor},
                     xlabel shift=-1pt,
                     ylabel shift=-1pt,
                     grid=both,
                     grid style={line width=.1pt, draw=gray!10},
                     major grid style={line width=.2pt,draw=gray!50},
                     minor tick num=5,
                     xmin=0,
                     xmax=31,
                     ymin=0.1,
                     ymax=10,
                     ymode=log,
                     log ticks with fixed point,
                     ylabel near ticks, yticklabel pos=right]
                     \addplot[solid] coordinates{(0,1)(100,1)};
            \addplot[thick,densely dotted] table[col sep=tab,x=demand,y=speed_up_factor_cpu]{big-girg.csv};
            \addplot[dashed] table[col sep=tab,x=demand,y=speed_up_factor_opencl]{big-girg.csv};
            \addplot[dashdotted] table[col sep=tab,x=demand,y=speed_up_factor_opencl_fe]{big-girg.csv};
        \end{axis}
    \end{tikzpicture}
    \caption{Total running time for applying our data reduction algorithms to our Multiple Hitting Set instances from the generated data set and solving them with CBC afterwards.}
    \label{fig:girg-speedup}
\end{figure}
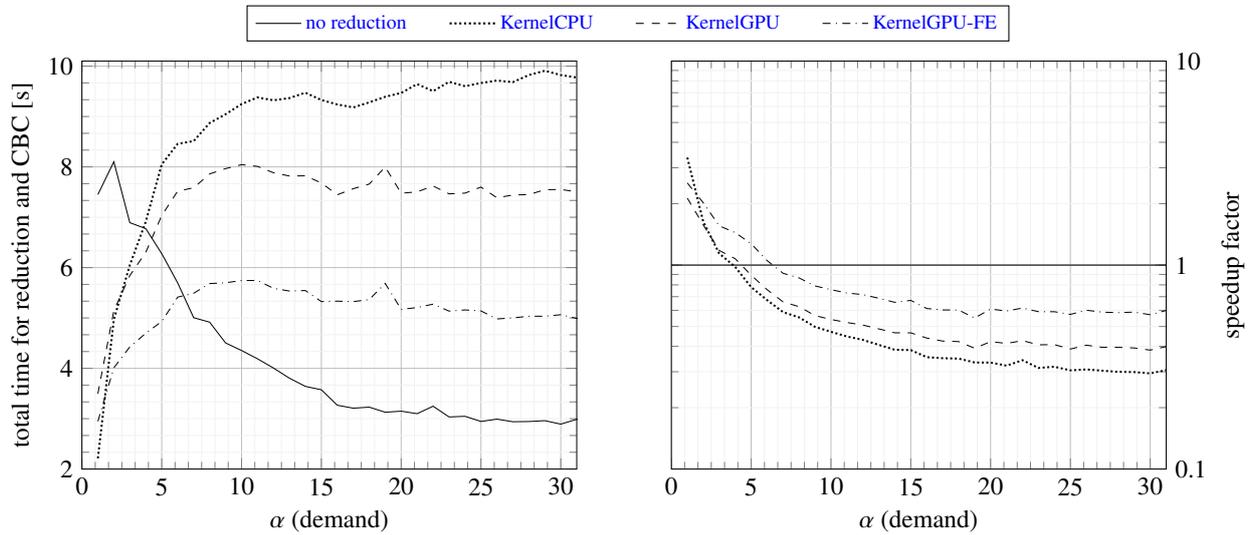

\subsection{Generated data set}
\label{sec:generated}
\noindent
In this section,
we present experimental results on
a data set generated
using a random hypergraph model proposed by \citet{BFFS19}.
According to their results,
choosing the following values for the model parameters
yields Hitting Set problem instances
close to those arising in real\hyp world public transportation
optimization problems:
\begin{itemize}[---]
\item $a=4\cdot10^{-5}$. This value affects the average vertex degree,
  which will be close to~$2$ on average.
\item $\beta=3.5$. This value controls the \emph{heterogeneity} of vertex degrees.
\item $T=0.5$. The \emph{temperature} influences the \emph{locality} of the network.
    In short,
    the geographic positions
    of a transportation network's vertices
    seem to be the cause of similarities
    in the network's connections (that is, hyperedges).
    For definitions and an in-depth analysis,
    we refer the reader to the original article of \citet{BFFS19}.
\end{itemize}
For each demand $\alpha\in\{1,\dots,30\}$,
we generated $10$~random instances with the above parameters
and an upper bound of $20\,000$~vertices and $2\,000$~hyperedges.
We removed possible empty hyperedges and isolated vertices from the generated hypergraphs
and thus obtained $10$~instances with $15\,014.8$~vertices and $1\,903.2$~hyperedges on average.
The average size of a hyperedge was about~$8$,
so we chose the maximum~$\alpha$ to be $30$.
On the obtained $10$~instances,
we ran the implementations described in \cref{sec:exp-setup}
for all values of~$\alpha$.
For each particular value of~$\alpha$,
we report the total numbers of vertices,
the total number of hyperedges,
the total data reduction time,
and the total CBC solution time,
over the $10$~instances.
The speed-up factor is reported with respect to the total solution time.

\paragraph{Results}
\cref{fig:girg-reduction} shows
the data reduction effect of various sets of data reduction rules.
This figure is similar to \cref{fig:cancer-reduction},
Compared to,
\cref{fig:cancer-reduction},
the gap between \ref{MD}+\ref{DP} and
\ref{MD}+\ref{DP}+\ref{FE} is much larger.
Whereas \ref{MD}+\ref{DP}
reduces the number of hyperedges to $40\,\%$ for $\alpha=1$,
and
is almost useless for $\alpha\geq 10$,
\ref{MD}+\ref{DP}+\ref{FE}
reduces the number of hyperedges to at most $10\,\%$
and leaves at most $1\,\%$ of the hyperedges for larger~$\alpha$.
A similar behaviour can be seen for the vertex reduction.

Concerning the running time,
\cref{fig:girg-speedup} shows
that, generally speaking,
among our data reduction algorithms,
the \algoGPUFE{} variant yields the highest speed-up,
again showing the feasibility of implementing kernelization on the GPU.
However,
our data reduction algorithms
yield speed-ups only for small values of~$\alpha$.
For $\alpha\geq7$,
it is better to apply CBC without data reduction.
Regarding rule \ref{LP},
it did not yield any speed-up
on this data set either
and we hence omitted it from the plot.

\section{Conclusion}
\noindent
\looseness=-1
We contributed to \citeauthor{Wei98}'s
data reduction algorithm for Hitting Set in three ways:
we proved that it yields a problem kernel
with a number of vertices and hyperedges
linear in the Dilworth number of the incidence graph;
we presented efficient parallel and sequential realizations
and experimentally evaluated them;
and
we generalized it to Multiple Hitting Set.
However,
the problem kernel with respect to the Dilworth number is just
a first step to understanding the structure of hypergraphs
that are well reducible by Weihe's algorithm:
the problem kernel is obtained
already after one exhaustive application
of \ref{DP} and one exhaustive application of \ref{MD},
yet the data reduction rules can be applied repeatedly
to shrink the hypergraph,
even more so in combination with \ref{FE}.
This naturally raises two questions:
Which natural hypergraph
parameter bounds the size of the reduced instance
after exhaustive applications of
all of \ref{DP}, \ref{MD}, and \ref{FE}?
Can the exhaustive application of all three of them be realized
effectively, say, on \NCeins{} circuits,
or in quadratic sequential time?

Besides this,
we can draw three more general conclusions from our work.
The first is that it seems worthwhile to study
the parameterized complexity of problems
with respect to the Dilworth number,
rather than with respect to the neighborhood diversity,
which is already well\hyp studied.

The second is that,
both, theoretical bounds on problem kernel sizes,
as well as empirical kernel size measurements,
are insufficient to make conclusions
on the effect of data reduction on the running time of solving
real problem instances.
In the case where problem instances are already solved quite
well,
we have seen that quite some additional effort
(in form of parallelization on GPUs)
might be required to speed up the solution process.

\looseness=-1
The third is what made it feasible
to implement our kernelization algorithms on GPUs:
note that we have shown a parallel kernelization algorithm
in terms of \NCeins{} circuits,
where each gate executes its own instruction on its own data,
whereas GPUs operate
in terms of the SIMD (Single Instruction, Multiple Data) model:
all cores of one multiprocessor perform the same operations,
yet on different data.
Data reduction algorithms seem to lend themselves
well to realization on GPUs
since, often,
they consist of a fixed set of data reduction rules,
applied to different parts of the data.
We thus expect that,
although previously more often studied
in the context of attacking NP\hyp hard problems \citep{BST15,BST20,BT20},
parallel kernelization will have a stronger real\hyp world impact
in the context of speeding up polynomial\hyp time algorithms
on large data sets,
such as linear\hyp time data reduction
was applied to speed up matching algorithms \citep{MNN20},
or in the context of designing
parallel fixed\hyp parameter algorithms \citep{BST15} for
P-complete problems,
which do not give in to massive parallelization.

\paragraph{Funding}
This work was initiated during
the First Workshop of Mathematical Center in Akademgorodok,
which was held during July 13--17 and August 10--14, 2020,
under support of the Ministry of Science and
Higher Education of the Russian Federation,
agreement No.\ 075-15-2019-167.
Until the end of 2020,
R.\ van Bevern,
P.\ V.\ Smirnov, and O.\ Yu.\ Tsidulko
were supported by Russian Foundation for Basic Research,
project 18-501-12031 NNIO\textunderscore a.

\bibliographystyle{hs-workshop}
\bibliography{hs-workshop}

\appendix
\section{Error in the data reduction of \citet{MMMB05}}
\label{apx:counterexample}
\noindent
\looseness=-1
\citet[Section 2.2, Rule 2]{MMMB05} attempt to generalize \ref{w2} to Multiple Hitting Set
using the following data reduction rule:
if there are two vertices $v_i$, $v_j$ such that $E(v_j) \subseteq E(v_i)$, and for every $e \in E(v_j)$ one has
$|e|> f(e)$, then delete $v_j$ can be deleted.
We now show that this rule, in fact,
can change the cardinality of an optimal solution.

Consider the following hypergraph
with three hyperedges $e_1=\{v_1,v_2\}$, $e_2=\{v_2,v_3,v_4\}$, and $e_3=\{v_2,v_3,v_5\}$
with equal demands $f(e_1)=f(e_2)=f(e_3)=2$.
It has a multiple hitting set $\{v_1,v_2,v_3\}$,
yet the rule suggested by \citet{MMMB05}
could delete~$v_3$ due to~$v_2$.
After deletion of~$v_2$, however,
the minimum multiple hitting set is~$\{v_1,v_2,v_4,v_5\}$
and has cardinality four.

\end{document}